\newtheorem{theorem}{Theorem}[section]
\newtheorem{proposition}[theorem]{Proposition}
\theoremstyle{definition}
\newtheorem{remark}[theorem]{Remark}
\numberwithin{equation}{section}
\newcommand{\IP}{\mathop{\null\mathds{P}}\nolimits}
\newcommand{\IR}{\mathds{R}}
\newcommand{\IZ}{\mathds{Z}}
\newcommand{\IC}{\mathds{C}}
\newcommand{\dist}{d}
\newcommand{\diam}{\text{\textnormal{diam}}}
\newcommand{\BLS}{\mathcal{L}}
\newcommand{\Sob}{\mathcal{H}}
\newcommand{\spin}{\sigma}
\begin{document}
\title[Spin systems from loop soups]{Spin systems from loop soups} 

\author{Tim van de Brug}
\author{Federico Camia}
\author{Marcin Lis}

\address{Tim van de Brug\\
Department of Epidemiology and Biostatistics\\
VU University Medical Center Amsterdam\\
De Boelelaan 1089a\\
1081 HV Amsterdam\\
The Netherlands}
\email{t.vandebrug@vumc.nl}

\address{Federico Camia\\
New York University Abu Dhabi\\
Saadiyat Island\\
Abu Dhabi\\
UAE\\
and
VU University Amsterdam\\
Department of Mathematics\\
De Boelelaan 1081a\\
1081 HV Amsterdam\\
The Netherlands}
\email{federico.camia@nyu.edu}

\address{Marcin Lis\\Statistical Laboratory\\ Centre for Mathematical Sciences\\University of Cambridge\\
Wilberforce Road\\Cambridge CB3 0WB\\UK}
\email{m.lis@statslab.cam.ac.uk}

\subjclass[2010]{82B20, 60G60, 60G18, 82B41}

\begin{abstract}
We study spin systems defined by the winding of a random walk loop soup. For a particular choice of loop soup intensity, we show that the
corresponding spin system is reflection-positive and is dual, in the Kramers-Wannier sense, to the spin system $\textnormal{sgn} (\varphi)$ where
$\varphi$ is a discrete Gaussian free field.

In general, we show that the spin correlation functions have conformally covariant scaling limits corresponding to the one-parameter family of
functions studied by Camia, Gandolfi and Kleban (\emph{Nuclear Physics B} {\bf 902}, 2016) and defined in terms of the winding of the Brownian
loop soup. These functions have properties consistent with the behavior of correlation functions of conformal primaries in a conformal field theory.
Here, we prove that they do correspond to correlation functions of continuum fields (random generalized functions) for values of the intensity
of the Brownian loop soup that are not too large.

\end{abstract}

\maketitle

\section{Introduction} \label{intro}
The random walk loop soup (RWLS) was introduced by Lawler and Trujillo Ferreras \cite{LawTru} as a discrete analog of the Brownian loop soup
(BLS) of Lawler and Werner \cite{LawWer}. The latter is a collection of planar loops of various sizes positioned at random, uniformly and independently,
within a planar domain. Each loop is a Brownian path constrained to begin and end at the same {root point}, but otherwise with no restriction,
and characterized by a \emph{time length} $t$ that is linearly related to its average area. The distribution in $t$ is proportional to $dt/t^{2}$,
so that there are many more small loops than large, and is chosen to ensure invariance under scale transformations. The overall density of loops
is characterized by a single parameter: the \emph{intensity} $\lambda>0$.

The BLS turns out to be not just scale invariant, but fully conformally invariant. For sufficiently low intensities $\lambda$, the intersecting loops form
clusters whose outer boundaries are distributed like Conformal Loop Ensembles (CLEs) \cite{She,SheWer}. CLEs are the unique ensembles of planar,
non-crossing and non-self-crossing loops satisfying a natural conformal restriction property that is conjecturally satisfied by the continuum scaling limits
of interfaces in two-dimensional models of statistical physics. The loops of a CLE$_\kappa$ are forms of SLE$_\kappa$ (the Schramm-Loewner
Evolution with parameter $\kappa$ \cite{Schramm2011,SheWer}). The CLEs generated by the BLS correspond to values of $\kappa$ between $8/3$
and~$4$. For example, it was recently proved that the collection of outermost interfaces in a planar critical Ising model in a finite domain
with plus boundary condition converges to CLE$_3$ in the scaling limit~\cite{BenHon}. Moreover,
the collection of outer boundaries of clusters of loops from the RWLS also converges to CLE in the scaling limit for appropriate values of the intensity $\lambda$~\cite{Lupu, BCL}.

Motivated by the
work of Freivogel and Kleban \cite{Freivogel} on bubble nucleation in theories of eternal inflation, Camia, Gandolfi and Kleban defined and computed certain statistical correlation functions that characterize aspects of the BLS distribution~\cite{CGK}.
They looked in particular at the net winding of all the loops around a given set of points
and found results consistent with the behavior of correlation functions of primary fields in a conformal field theory (CFT). 
The winding of Brownian paths and loops has been the subject of classical works of Spitzer~\cite{spitz}, Yor~\cite{yor}, and Pitman and Yor~\cite{pityor},
and more recently was studied e.g.\ by Garban and Trujillo Ferreras~\cite{gartru}.
It has also been discussed in the physics literature in connection to anyons (see e.g.~\cite{Ouv,CDO}).

Using the RWLS, in Section \ref{spin_fields} of this paper we introduce \emph{spin systems} which are discrete analogs of a one-dimensional subclass of the objects studied in \cite{CGK}. 
We note that similar objects are discussed in Section~6 of~\cite{Lej11}.
When the intensity of the RWLS is $1/2$, the corresponding spin system $\sigma$ is dual, in the Kramers-Wannier sense,
to the spin model $\textnormal{sgn} (\varphi)$ where $\varphi$ is a discrete Gaussian free field.
In Section \ref{griffiths} we show that $\sigma$  satisfies the Griffiths inequalities and is reflection-positive.
In Section \ref{sec:correlation_functions}, for general $\lambda$, we show that the correlation functions of the spin systems defined in Section \ref{spin_fields}
converge to the conformally covariant functions studied in \cite{CGK}. Our proof uses a convergence result of a certain observable related to the loop erased walk due Bene{\v{s}}, Lawler, and Viklund~\cite{BLV}.

In the last section, we show that, for values of the intensity $\lambda$ of the BLS that are not too large, but still including the most interesting case, $\lambda=1/2$,
one can construct continuum Euclidean fields (random generalized functions) whose correlation functions are the functions obtained in~\cite{CGK}. These fields
do not seem to correspond to any currently known CFT. As pointed out in~\cite{CGK}, the putative CFT associated with those correlation functions has the interesting
feature that the conformal dimensions of the primary operators are real and positive, but vary continuously and are periodic functions of a real parameter.

\section{RWLS spin fields and the discrete Gaussian free field} \label{spin_fields}

The \emph{(rooted) random walk loop measure} $\tilde \mu$ is a measure on nearest neighbor loops on $\IZ^2$ (possibly scaled by a factor $a>0$),
which we identify with loops in the complex plane by linear interpolation. For a loop $\gamma$ in $\IZ^2$, we define
\[
\tilde\mu(\gamma) = \frac{1}{t_{\gamma}} 4^{-t_{\gamma}},
\]
where $t_{\gamma}$ is the time length of $\gamma$, i.e.\ its number of steps. The (rooted) random walk loop soup $\mathcal{\tilde L}$ with intensity $\lambda>0$
(see \cite{LawTru}) is a Poissonian realization from the measure $\lambda\tilde\mu$. For $a>0$, by a \emph{discrete domain} in the scaled lattice $a\IZ^2$, we mean
a connected subgraph of $a\IZ^2$ which can be written as a union of square faces of $a\IZ^2$. For a discrete simply connected domain $D_a$ in $a\IZ^2$, let $\mathcal{\tilde L}_{D_a}$
be the collection of random walk loops in ${D_a}$, and let $\tilde \mu_{D_a}$ be the measure $\tilde \mu$ restricted to loops that stay in $D_a$.
By $\langle \cdot \rangle_{{D_a}}=\langle \cdot \rangle_{{D_a},\lambda}$ we will denote the expectation with respect to the loop soup with intensity measure
$\lambda \tilde \mu_{D_a}$.

Let $D^*_a$ denote the dual of $D_a$, whose vertices are the faces of $D_a$. For $z \in D^*_a$ we define
\[
\tilde N_{D_a}(z) = \sum_{\gamma\in \mathcal{\tilde L}_{D_a}} N_{\gamma}(z),
\]
where $N_{\gamma}(z)$ is the winding number of $\gamma$ around $z$. We define the random walk loop soup \emph{spin field} by
\[
\sigma(z)=\sigma_{D_a}(z) = e^{i \pi \tilde N_{D_a}(z)},
\]
which takes values $\pm 1$.

Let $\partial D_a\subset a\mathbb{Z}^2\setminus D_a$ be the set of vertices at graph distance $1$ from $ D_a$.
The discrete Gaussian free field (DGFF) on $D_a$ with boundary conditions $\psi$ is a multidimensional Gaussian variable $\varphi: D_a \cup \partial D_a \to \IR$ satisfying $\phi|_{\partial D_a} =\psi$ with density
given by
\[
\frac{1}{\mathcal{Z}} \exp \Big (-\frac12 \sum_{x \sim y} (\varphi_x-\varphi_y)^2 \Big)
\]
where the sum is over all edges $\{x,y\}$ in $D_a \cup \partial D_a$. Equivalently, $\varphi$ is a Gaussian variable with mean given by the harmonic extension of 
$\psi$, and covariance $\textnormal{Cov}(\varphi_x,\varphi_y)=G(x,y)$, where $G$ is the Green's function of simple random walk killed on hitting $\partial D_a$. 

The DGFF $\varphi$ can be thought of as a model of a random surface whose elevation above a point $x$ in the plane is given by $\varphi_x$, and where
large gradients between neighboring points are penalized. 
The continuum counterpart -- the Gaussian free field (GFF), to which the DGFF converges in an appropriate sense in the scaling limit, is too rough to be defined pointwise but can still be made sense of 
as a random generalized function. It is a Gaussian field with covariance given by the Green's function of two-dimensional Brownian motion, and, as such, is conformally invariant.
It turns out that the GFF is a universal object in random conformal geometry as its (carefully defined) level and flow lines encode different variants of the Schramm-Loewner Evolution curves~\cite{SchShe,MilShe1}.
It is also the scaling limit of other discrete models like the height function of the dimer model~\cite{Kenyon}.

A deep connection between random walk loop soups and the DGFF in form of the celebrated isomorphism theorems has its roots in the seminal work of Symanzik~\cite{Symanzik}.
We will focus on the result of Le Jan~\cite{Lej11} which may be viewed as an extension of Dynkin's isomorphism~\cite{Dyn84a,Dyn84b}. Consider a random walk loop soup $\tilde{\mathcal{L}}$.
Count the number of visits of all loops in $\tilde{\mathcal{L}}$ to a vertex $x$, and denote the number by $N_x$. The occupation field of $\tilde{\mathcal{L}}$ at $x$, denoted by $\mathcal{T}_x$, is a sum of $N_x$ (globally) independent 
exponential random variables with mean~$1$ together with $\frac12$ times one additional exponential variable with mean~$1$ (or in other words a $\text{Gamma}(1,\frac12)$ random variable). Le Jan~\cite{Lej11} proved that the joint
distribution of $(\mathcal{T}_x)_{x\in D_a}$ for a loop soup with intensity $\lambda=\frac12$ is equal to that of $\frac12(\varphi^2_x)_{x\in D_a}$.
Later Lupu~\cite{Lupu2016} provided a coupling between $\tilde{\mathcal{L}}$ and $\varphi$ that also accounts for the signs of~$\varphi$. His result is in turn intrinsically related to the Edwards--Sokal coupling 
between the Fortuin--Kasteleyn model with $q=2$ and the Ising model~\cite{ForKas,EdwSok,LupWer}. 

The Ising model~\cite{Ising} on $D_a$ is a random assignment $\mathbf{s}: D_a \to \{ -1,+1\}$ of spins to the vertices of $D_a$ drawn 
according to the law with the discrete density w.r.t.\ the counting measure given by
\[
\frac{1}{\mathcal{Z}'} \exp \Big (-\frac12 \sum_{x \sim y} J_{\{x,y\}} (\mathbf{s}_x-\mathbf{s}_y)^2 \Big),
\]
where the positive numbers $J_{\{x,y\}}$ are called coupling constants, the sum is taken over all edges $\{x,y\}$ of $D_a$, and ${\mathcal{Z}'}$ is the
normalizing constant. The relation with the DGFF is that the law of the sign $\textnormal{sgn}(\varphi)$ conditioned on the amplitude $|\varphi|$ is the
Ising model with free boundary conditions and coupling constants
\begin{align} \label{eq:dgffising}
J_{\{x,y\}}=|\varphi_x\varphi_y|.
\end{align}
A fundamental construction of Kramers and Wannier~\cite{KraWan} assigns to an Ising model with free boundary conditions on $D_a$ an Ising model with $+1$ boundary conditions on the dual domain $D_a^*$.
In the dual model, the spins are assigned to the vertices of $D_a^*$ (which are the faces of $D_a$), and the spin of the unbounded face is fixed to be $+1$.
For an edge $e$, let $e^*$ be the dual edge crossing $e$. The dual coupling constants satisfy the Kramers--Wannier duality relation:
\begin{align}\label{eq:kramerswannier}
J_{e^*} = -\frac12 \log \tanh J_{e}.
\end{align}

Recall that $\varphi$ is defined on the vertices, and $\spin$ on the faces of $D_a$.
In view of the following theorem, \eqref{eq:dgffising}, and \eqref{eq:kramerswannier}, the spin field $\spin$ with parameter $\lambda=\frac12$ can be thought of as a Kramers--Wanier dual of $\textnormal{sgn} (\varphi)$ -- the sign of a DGFF.
\begin{theorem}[Sampling $\sigma$ with $\lambda=\frac12$ from the DGFF and Ising model] \label{thm:DGFFIsing}
Consider the following algorithm:
\begin{itemize}
\item[(1)] Sample the amplitude of the DGFF $|\varphi|$ on $D_a$ with $0$ boundary conditions.
\item[(2)] Sample the Ising model on the dual domain $D_a^*$ with $+1$ boundary conditions, and coupling constants given by
\begin{align} \label{eq:dualcoupling}
J_{\{x,y\}^*} = -\frac12 \log \tanh |\varphi_x \varphi_y|.
\end{align}
\end{itemize}
The resulting assignment of $\pm 1$'s to the faces of $D_a$ is distributed like the spin field $\spin$ with parameter $\lambda=\frac12$.
\end{theorem}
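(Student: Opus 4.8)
The plan is to peel off the two steps of the algorithm using Kramers--Wannier duality and then match what remains against the loop soup. First I would run step (2) backwards through duality. By \eqref{eq:dualcoupling} the dual couplings satisfy $J_{\{x,y\}^*}=-\tfrac12\log\tanh|\varphi_x\varphi_y|$, so by \eqref{eq:kramerswannier} the Ising model of step (2) on $D_a^*$ with $+1$ boundary conditions is exactly the Kramers--Wannier dual of the Ising model on $D_a$ with \emph{free} boundary conditions and couplings $J_{\{x,y\}}=|\varphi_x\varphi_y|$; by \eqref{eq:dgffising} the latter is the conditional law of $\textnormal{sgn}(\varphi)$ given $|\varphi|$. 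Passing to the high-temperature (even-subgraph) representation, the dual configuration produced by the algorithm is determined, up to the fixed $+1$ outer spin, by its domain walls $\omega\subset D_a$, an even subgraph whose conditional law given $|\varphi|$ is
\[
\IP(\omega\mid|\varphi|)\ \propto\ \prod_{\{x,y\}\in\omega}\tanh|\varphi_x\varphi_y|.
\]
It therefore suffices to show that $\sigma$, read as a $\pm1$ configuration on the faces of $D_a$, has domain walls with exactly this $|\varphi|$-averaged law.

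Second, I would convert $\sigma$ itself into an even-subgraph observable of the soup. For two faces $z_1,z_2$ sharing a primal edge $e$ and any loop $\gamma$, the winding difference $N_\gamma(z_1)-N_\gamma(z_2)$ equals the signed number of traversals of $e$ by $\gamma$; summing over $\mathcal{\tilde L}_{D_a}$ and applying $e^{i\pi\,\cdot}$ yields $\sigma(z_1)\sigma(z_2)=(-1)^{N(e)}$, where $N(e)$ is the total number of loop traversals of $e$. Since each loop is a closed walk, the set $\omega=\{e:N(e)\text{ odd}\}$ is an even subgraph of $D_a$, and because all loops stay inside $D_a$ the faces outside carry winding $0$, so $\sigma\equiv+1$ there. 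Hence $\sigma$ is precisely the $\pm1$ configuration on $D_a^*$ with $+1$ outer boundary condition and domain walls $\omega$, matching the boundary data of the algorithm. It remains to identify the law of the even subgraph $\omega$ of odd-traversed edges.

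This last identification is the crux, and it is where Le Jan's and Lupu's isomorphisms enter. I would realize the soup on the metric (cable) graph, where by Le Jan's theorem its occupation field equals $\tfrac12\tilde\varphi^2$ and, by Lupu's coupling, its clusters are the sign clusters of the cable GFF $\tilde\varphi$, whose restriction to $\partial D_a$ vanishes. On the cable over an edge $\{x,y\}$ the field is a Brownian bridge between $\varphi_x$ and $\varphi_y$, so conditionally on $|\varphi|$ the endpoints fail to be separated by a zero with a probability of the form $1-e^{-2|\varphi_x\varphi_y|}$. Translating these cable-connectivity events, through the Edwards--Sokal/FK picture and the even-subgraph expansion, into the parities $N(e)\bmod 2$ should produce exactly the weights $\tanh|\varphi_x\varphi_y|$ and hence the displayed conditional law of $\omega$; averaging over $|\varphi|$ then closes the argument. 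The main obstacle is precisely this translation: one must verify that the winding parity $(-1)^{N(e)}$ of the \emph{discrete} soup matches the cable-crossing (sign-change) structure of the coupled GFF, and carefully account for the combinatorial factors that turn the connection probability $1-e^{-2|\varphi_x\varphi_y|}$ into the even-subgraph weight $\tanh|\varphi_x\varphi_y|$, while keeping the boundary contributions (where $\varphi$ vanishes and the weight is $0$) consistent with the $+1$ dual boundary condition.
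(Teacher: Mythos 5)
Your first two steps are sound and run parallel to the paper's argument: unwinding step (2) through Kramers--Wannier duality reduces the claim to showing that, conditionally on $|\varphi|$, the domain walls of $\sigma$ are the even subgraph of $D_a$ weighted by $\prod_{e\in\omega}\tanh|\varphi_x\varphi_y|$; and the observation $\sigma(z_1)\sigma(z_2)=(-1)^{N(e)}$ correctly identifies those domain walls with the set $\omega=\{e: N(e)\text{ odd}\}$ of odd-traversed edges, with $+1$ outside $D_a$. The gap is exactly where you flag it, and it is not a mere technicality. Lupu's cable-graph coupling and the quantities $1-e^{-2|\varphi_x\varphi_y|}$ encode \emph{connectivity} information---the cluster structure of the soup and hence the FK/Edwards--Sokal picture that underlies Proposition~\ref{thm:DGFFcoins}---but they do not determine the \emph{parity} of the number of traversals of a discrete edge: two endpoints can lie in the same sign cluster of the cable field while the edge between them is traversed an even (possibly zero) number of times, so $(-1)^{N(e)}$ is not a function of the cable-crossing structure, and there is no direct passage from connection probabilities to the $\tanh$ weights of an even subgraph.

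The ingredient that closes this gap, and the one the paper actually uses, is the random current representation: conditionally on the occupation field $(\mathcal{T}_x)_{x\in D_a}$, the traversal numbers $(N_e)$ are distributed as a sourceless random current with parameters $J_{\{x,y\}}=2\sqrt{\mathcal{T}_x\mathcal{T}_y}$, which equals $|\varphi_x\varphi_y|$ once Le Jan's isomorphism identifies $\mathcal{T}_x$ with $\frac12\varphi_x^2$. The standard correspondence between sourceless currents and the high-temperature expansion then gives that the set of odd edges of such a current is precisely the $\prod_{e\in\omega}\tanh J_e$-weighted even subgraph, which is what your first step requires. So you should either invoke this conditional-current fact (as in \cite{Wer2016,Lej16}, with the caveats about parameters noted in the paper's footnote) or prove it directly from the Poissonian structure of the soup; the cable-graph detour does not supply it.
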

\begin{proof}
Let $\tilde{\mathcal{L}}$ be a random walk loop soup in $D_a$ with intensity $\lambda=\frac12$ together with its occupation field $(\mathcal{T}_x)_{x\in D_a}$. For an edge $e$, let $N_e$ be the total number of unoriented traversals 
of $e$ by all loops in $\tilde{\mathcal{L}}$. It is known\footnote{~The first observation that this conditional distribution should be the one of a random current can be found in \cite{Wer2016} (the discussion after Proposition 7). 
However, the parameters of the current in~\cite{Wer2016} are incorrect. 
In Proposition 3.2 of~\cite{Lej16}, an almost correct statement is given (modulo a missing factor of $2$ resulting from the fact that each unoriented edge corresponds to two oriented edges) and the proof uses Gaussian integrals.
 } 
 that conditioned on the value of $(\mathcal{T}_x)_{x\in D_a}$, $(N_e)_{e\in E(D_a)}$ is distributed like a sourceless random current in $D_a$
with parameters (as defined in \cite{GHS,DC2016}, with $\beta=1$) 
\begin{align} \label{eq:coupling}
J_{\{ x,y\}} =  2\sqrt{\mathcal{T}_x \mathcal{T}_y}.
\end{align}

Note that the value of $\spin(z)$ is determined by the edges with odd values of $N_e$ in the following way: draw a simple path in the dual graph connecting $z$ 
with infinity and count the number of odd-valued edges that cross the path. Set $\spin(z)=-1$ if the resulting number is odd, and $\spin(z)=+1$ otherwise. 

It is a standard consequence of the Kramers--Wannier duality that the set of odd-valued edges in the random current is distributed
like the set of dual edges separating opposite spins in the Ising model dual to that defined by~\eqref{eq:coupling} (see e.g.\ \cite{DCL}). This means that conditioned on $(\mathcal{T}_x)_{x\in D_a}$, which by Le Jan's results 
is distributed like $\frac12(\varphi^2_x)_{x\in D_a}$, the spin model $\spin$ has the law of an Ising model on $D^*_a$ with coupling constants given by~\eqref{eq:dualcoupling}.
\end{proof}

\begin{proposition}[Sampling $\sigma$ with $\lambda=\frac12$ from the DGFF and coin flips] \label{thm:DGFFcoins}
Consider the following algorithm:
\begin{itemize}
\item[(1)] Sample the Gaussian free field $\varphi$ on $D_a$ with $0$ boundary conditions, and let $\eta$ be the set of dual bonds separating vertices with different values of $\textnormal{sgn}(\varphi)$.
\item[(2)] For each dual bond $\{x,y\}^*\notin \eta$, sample an independent Bernoulli random variable with success probability $\exp(-2 |\varphi_x \varphi_y|)$,
and let $\omega$ be the set of edges with a success outcome. 
\item[(3)] For each connected component of $\eta \cup \omega$ (treated as a subgraph of the dual graph $D_a^*$, including isolated vertices) that does not touch the outer boundary of $D_a^*$, sample an independent, symmetric $(\pm1)$-valued random variable,
and assign its value to each face of $D_a$ in that connected component. Assign $+1$ to the remaining faces.
\end{itemize}
The resulting configuration of $\pm 1$'s on the faces of $D_a$ is distributed like the spin field $\spin$ with parameter $\lambda=\frac12$.
\end{proposition}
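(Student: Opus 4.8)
The plan is to reduce to Theorem~\ref{thm:DGFFIsing}. Conditionally on the amplitude $|\varphi|$, that theorem identifies $\sigma$ (with $\lambda=\tfrac12$) with the Ising model on $D_a^*$ with $+1$ boundary conditions and couplings \eqref{eq:dualcoupling}. Hence it suffices to prove that, conditionally on $|\varphi|$, the output of steps (1)--(3) is exactly this dual Ising model. Throughout I condition on $|\varphi|$, so that the primal couplings $J_{\{x,y\}}=|\varphi_x\varphi_y|$ are fixed.

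First I would use \eqref{eq:dgffising}: conditionally on $|\varphi|$, the signs $\mathrm{sgn}(\varphi)$ form an Ising model on $D_a$ with \emph{free} boundary conditions and couplings $J_{\{x,y\}}=|\varphi_x\varphi_y|$, and $\eta$ is precisely its set of domain walls. I would then couple this primal Ising model to its random-cluster ($q=2$) representation through the Edwards--Sokal measure, in which an edge $e$ is open with probability $p_e=1-e^{-2J_e}$ and open edges join only equal spins; write $\Omega$ for the resulting bond configuration. The crucial remark is that the step-(2) success probability $\exp(-2|\varphi_x\varphi_y|)=e^{-2J_e}=1-p_e$ is exactly the probability that an equal-spin edge is \emph{closed}. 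Therefore, conditionally on $\mathrm{sgn}(\varphi)$, the set $\eta\cup\omega$ has the same law as the set of dual edges crossing the closed edges of $\Omega$: every differing-spin edge is closed and lies in $\eta$, while each equal-spin edge is retained independently with probability $1-p_e$, which is the definition of $\omega$. Since $\mathrm{sgn}(\varphi)$ carries the same Ising law in both pictures, the law of $\eta\cup\omega$ (given $|\varphi|$) equals that of the planar dual $\Omega^*$ of $\Omega$, defined by $e^*\in\Omega^*\iff e\notin\Omega$.

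Next I would invoke standard planar random-cluster duality at $q=2$: the dual of the free random-cluster model on $D_a$ with parameters $p_e$ is the wired random-cluster model on $D_a^*$ with parameters $p_{e^*}$ satisfying $\tfrac{p_{e^*}}{1-p_{e^*}}=\tfrac{2(1-p_e)}{p_e}=\tfrac{2e^{-2J_e}}{1-e^{-2J_e}}$; this rearranges to $1-p_{e^*}=\tanh J_e$, i.e.\ $p_{e^*}=1-e^{-2J_{e^*}}$ with $J_{e^*}=-\tfrac12\log\tanh J_e$ as in \eqref{eq:dualcoupling}, the wiring being carried by the outer face of $D_a^*$ (fixed to $+1$), dual to the free boundary of the primal model. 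Thus $\eta\cup\omega$ is distributed as the wired random-cluster configuration of the dual Ising model. Finally, step (3) is precisely the Edwards--Sokal reconstruction of Ising spins from this configuration: each cluster not attached to the wired boundary receives an independent fair $\pm1$ spin while the boundary cluster is frozen to $+1$. As the final assignment is a function only of $\eta\cup\omega$ and the fresh colouring variables, its law is the $+$ Ising model on $D_a^*$ with couplings \eqref{eq:dualcoupling}, which by Theorem~\ref{thm:DGFFIsing} is the law of $\sigma$ with $\lambda=\tfrac12$.

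I expect the main obstacle to be the careful bookkeeping in the two previous steps rather than any single hard estimate: confirming that replacing the Edwards--Sokal ``closed equal-spin edges'' by the independent fresh $\mathrm{Bernoulli}(e^{-2|\varphi_x\varphi_y|})$ variables of step (2) reproduces the correct conditional law (this is exactly where the specific probability $e^{-2|\varphi_x\varphi_y|}$ enters), and matching the boundary conditions under planar duality, so that the free primal boundary becomes the wired dual boundary realised by freezing the outer face of $D_a^*$ to $+1$ in step (3).
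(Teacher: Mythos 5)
Your proposal is correct and follows essentially the same route as the paper's proof: reduce to Theorem~\ref{thm:DGFFIsing} via the Edwards--Sokal coupling, observe that the step-(2) Bernoulli$(e^{-2|\varphi_x\varphi_y|})$ variables reproduce the closed equal-spin edges of the primal FK configuration so that $\eta\cup\omega$ is the wired dual random-cluster configuration, and check via planar $q=2$ duality that its parameters are $1-\tanh|\varphi_x\varphi_y|=1-e^{-2J_{e^*}}$ with $J_{e^*}$ as in \eqref{eq:dualcoupling}. Your duality computation agrees with the paper's (which cites equation~(6.5) of \cite{Grimmett}), and your explicit attention to the free-primal/wired-dual boundary matching is a welcome clarification of a point the paper treats implicitly.
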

\begin{proof}
From Theorem~\ref{thm:DGFFIsing} and the Edwards--Sokal coupling between the (FK) random cluster model with parameter $q=2$ and the Ising model (see e.g.\ ~\cite{Grimmett}), it is enough 
to prove that the set $\eta$ after step (2) is distributed like the random cluster model on $D^*_a$ with wired boundary conditions and parameters 
\[
p^*_{\{x,y\}^*}=1-\exp(-2J_{\{x,y\}^*})= 1- \tanh |\varphi_x \varphi_y|.
\]

To this end, consider an FK model on $D_a$ with free boundary conditions and parameters $p_e$.
Recall that in the Edwards--Sokal coupling, to recover the random cluster model configuration from the Ising spin configuration $\textnormal{sgn}(\varphi)$ 
with coupling constants $J_e$ satisfying $p_e=1-\exp(-2J_e)$, 
one performs independent Bernoulli percolation with success probabilities $p_e$ on the edges
whose endpoints carry the same spin. The dual random cluster configuration with wired boundary conditions 
is hence distributed like independent percolation on $D^*_a$ with success probabilities $1-p_e=\exp(-2J_e)$ 
union with the dual edges separating opposite spins.
Hence, $\eta$ after step (2) is distributed like a random cluster model on $D_a^*$ with wired boundary conditions and success probabilities (see, for example,
equation~(6.5) of~\cite{Grimmett})
\[
p^*_{\{x,y\}^*} = \frac{2-2p_{\{x,y\}}}{2-p_{\{x,y\}}}=\frac{2\exp(-2|\varphi_x \varphi_y|)}{1+\exp(-2|\varphi_x \varphi_y|)}=1- \tanh |\varphi_x \varphi_y|
\]
which completes the proof.
\end{proof}

\begin{remark}
To a configuration of loops one can naturally assign a $1$-form, i.e., an antisymmetric function on the directed edges of $D_a$ given by
the difference of the total number of jumps of the loops along the directed edge and its reversal. It is clear that this $1$-form is divergence free in the sense that 
the sum of values over all directed edges emanating from a single vertex is zero. This makes it posssible to define a \emph{height function} of the collection of loops
by summing up the total flux of the $1$-form across paths in the dual graph as it is done e.g.\ to define the height function of a dimer model.
In this language, the field $\tilde N_{D_a}(z)$ is exactly the height-function, and our construction of the spin system is analogous to the relation 
of the XOR-Ising model and the height function of a related dimer model~\cite{Dub, BdT, DCL}.
\end{remark}
\section{Griffiths inequalities and reflection positivity} \label{griffiths}
In this section we prove that the spin field is positively correlated for all $\lambda >0$, and its ``massive'' version is reflection positive at $\lambda=\frac{1}2$.

The following inequalities satisfied by the spin field are classical in the context of ferromagnetic spin systems~\cite{Gri67,KelShe68}.

\begin{proposition}[Griffiths inequalities] \label{prop:griffiths} 
Let $D_a\subset D_a'$ be any two finite discrete domains in $a\IZ^2$.
Let $\lambda>0$, and $z_1,\ldots, z_n$, $w_1, \ldots, w_k$, be faces of $D_a$. Then,
\begin{itemize}
\item[(i)] $\langle \prod_{j=1}^n\sigma(z_j) \rangle_{D_a} \geq 0$,
\item[(ii)] $\langle \prod_{j=1}^n\sigma(z_j) \rangle_{D_a} \geq \langle \prod_{j=1}^n\sigma(z_j) \rangle_{D'_a} $,
\item[(iii)] $\langle \prod_{j=1}^n\sigma(z_j)  \prod_{j=1}^k\sigma(w_j)  \rangle_{D_a}  \geq \langle \prod_{j=1}^n\sigma(z_j) \rangle_{D_a}  \langle \prod_{j=1}^k\sigma(w_j) \rangle_{D_a} $.
\end{itemize}
\end{proposition}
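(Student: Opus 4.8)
The plan is to exploit the fact that the loop soup is a Poisson process, so that the expectation of any product of spins factorizes through the exponential (Campbell) formula. First I would observe that each loop contributes a sign to the product of spins: since $\sigma(z_j) = e^{i\pi \tilde N_{D_a}(z_j)}$ and $\tilde N_{D_a}(z_j) = \sum_{\gamma \in \tilde{\mathcal{L}}_{D_a}} N_\gamma(z_j)$,
\[
\prod_{j=1}^n \sigma(z_j) = \prod_{\gamma \in \tilde{\mathcal{L}}_{D_a}} \epsilon_\gamma, \qquad \epsilon_\gamma := (-1)^{\sum_{j=1}^n N_\gamma(z_j)} \in \{-1,+1\}.
\]
Applying the exponential formula for a Poisson process with intensity $\lambda\tilde\mu_{D_a}$ to the function $\gamma \mapsto \epsilon_\gamma$ gives
\[
\Big\langle \prod_{j=1}^n \sigma(z_j) \Big\rangle_{D_a} = \exp\Big( \lambda \int (\epsilon_\gamma - 1)\, d\tilde\mu_{D_a}(\gamma) \Big) = \exp\Big( -2\lambda\, \tilde\mu_{D_a}(O_A) \Big),
\]
where $O_A$ denotes the set of loops with $\sum_{j} N_\gamma(z_j)$ odd, so that $\epsilon_\gamma - 1 \in \{0,-2\}$ loop by loop.

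Inequality (i) is then immediate, since the right-hand side is the exponential of a real number. For (ii), I would note that the winding numbers $N_\gamma(z_j)$ are topological quantities of the loop that do not depend on the ambient domain, while $\tilde\mu_{D_a}$ is the restriction of $\tilde\mu$ to loops contained in $D_a$. Since $D_a \subset D_a'$, every loop counted in $\tilde\mu_{D_a}(O_A)$ is also counted in $\tilde\mu_{D_a'}(O_A)$, whence $\tilde\mu_{D_a}(O_A) \leq \tilde\mu_{D_a'}(O_A)$; as the exponent is negative and monotone in this mass, this yields $\langle \prod_j \sigma(z_j)\rangle_{D_a} \geq \langle \prod_j \sigma(z_j)\rangle_{D_a'}$.

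For the second Griffiths inequality (iii), I would write $\epsilon_\gamma^A$ and $\epsilon_\gamma^B$ for the sign factors attached to the families $A=(z_j)$ and $B=(w_j)$. Taking logarithms, the claim is equivalent to $\int (\epsilon^A_\gamma\epsilon^B_\gamma - \epsilon^A_\gamma - \epsilon^B_\gamma + 1)\, d\tilde\mu_{D_a} \geq 0$, that is, $\int (\epsilon^A_\gamma - 1)(\epsilon^B_\gamma - 1)\, d\tilde\mu_{D_a} \geq 0$. This holds because the integrand is a product of two factors each taking values in $\{0,-2\}$, hence is nonnegative loop by loop; no genuine correlation inequality or GKS machinery is required, which reflects the Poissonian (independent-increment) structure of the soup.

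The only real obstacle is checking that the exponential formula applies, i.e.\ that $\tilde\mu_{D_a}(O_A)<\infty$ so every integral converges. I would bound $\tilde\mu_{D_a}(O_A)$ by the total mass $\tilde\mu_{D_a}(\text{all loops})$ and show the latter is finite: writing the mass of loops rooted at $x$ as $\sum_t \tfrac1t 4^{-t} N_t(x)$, with $N_t(x)$ the number of length-$t$ loops rooted at $x$, and using $N_t(x) = 4^t\, p_t^{D_a}(x,x)$ together with the $O(1/t)$ decay of the return probability of simple random walk killed on exiting $D_a$, the sum is dominated by $\sum_t O(1/t^2)<\infty$. The crucial point is the $1/t_\gamma$ weight in $\tilde\mu$, which upgrades the merely recurrence-summable return probabilities to an absolutely convergent series; this is precisely what makes $\langle\,\cdot\,\rangle_{D_a}$ well defined in the first place.
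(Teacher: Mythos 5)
Your proof is correct and follows essentially the same route as the paper: both rely on the Poisson exponential formula to express the $n$-point function as $\exp[-2\lambda\,\tilde\mu_{D_a}(\gamma:\sum_j N_\gamma(z_j)\text{ odd})]$, from which (i) and (ii) are immediate and (iii) reduces to the pointwise nonnegativity of $(\epsilon^A_\gamma-1)(\epsilon^B_\gamma-1)$, exactly matching the paper's exponent $4\lambda\,\tilde\mu_{D_a}(\gamma:\text{both sums odd})$. Your added check that the total loop mass in a finite domain is finite is a sensible extra detail the paper leaves implicit.
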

\begin{proof}
Using the definition of the spin field and the expression of the $n$-point function in the first displayed equation in the proof of Theorem~4.3 of~\cite{CGK},
we can express the $n$-point function as
\begin{align} \label{eq:n-point-function}
 \langle\sigma (z_1) \ldots \sigma(z_n) \rangle_{D_a} &=    \big \langle (-1)^{\sum_{j=1}^n \tilde N_{D_a}(z_j)} \big \rangle_{D_a} \nonumber \\
 & = \exp\Big [\lambda  \tilde \mu \big((-1)^{\sum_{j=1}^n N_{\gamma}(z_j)}-1\big) \Big] \nonumber \\
 &=  \exp\Big[ -2\lambda\tilde \mu_{D_a}\big(\gamma: \textstyle \sum_{j=1}^n N_{\gamma}(z_j) \text{ is odd}\big) \Big] ,
\end{align}
from which the first two inequalities immediately follow. Moreover, we get that
\begin{align*}
&\frac{\langle \prod_{j=1}^n\sigma(z_j)  \prod_{j=1}^k\sigma(w_j)  \rangle_{D_a} }{\langle \prod_{j=1}^n\sigma(z_j) \rangle_{D_a}  \langle \prod_{j=1}^k\sigma(w_j) \rangle_{D_a} } =\\ 
&\qquad  \exp\Big[ 4 \lambda\tilde \mu_{D_a}\big(\gamma: \textstyle \sum_{j=1}^n N_{\gamma}(z_j) \text{ is odd, and }\textstyle \sum_{j=1}^k N_{\gamma}(w_j)\text{ is odd} \big) \Big]\ge 1,
\end{align*}
which gives the third inequality.
\end{proof}

Note that the spin field cannot be directly defined on the whole square lattice $a\IZ^2$ due to the fact that large random walk loops carry infinite mass and
hence each face of $a\IZ^2$ is covered by infinitely many loops. With the help of the second inequality from the theorem above, we can define an infinite volume
limit field as the finite domain $D_a $ approaches $\IZ^2$. However, by analyzing the correlation functions and using e.g.\ the fact that the mass of random walk
loops in $a\IZ^2$ passing through a single edge is infinite,  we see that the field is trivial, i.e., it is a collection of iid symmetric $(\pm 1)$-valued variables.

One can get around this issue by considering a {massive} version of the loop measures~\cite{Cam17}.
Let $\kappa >0$. For a loop $\gamma$ in $\IZ^2$, we define the \emph{massive loop measure} by
\[
\tilde\mu^{\kappa}(\gamma) = \frac{1}{t_{\gamma}} (4+\kappa)^{-t_{\gamma}}.
\]
Under $\tilde\mu^{\kappa}$, the total measure of large loops  intersecting a bounded region of space decays exponentially with the size of the loops.
One can hence define a spin field $\sigma^{\kappa}$ directly from the infinite volume loop soup with intensity measure $\lambda \tilde\mu^{\kappa}(\gamma)$.

We will now show that this spin model is reflection positive. (See \cite{Biskup} for more information on the concept and use of reflection
positivity in the context of lattice spin models. The question of reflection positivity in the loop soup context is addressed in Chapter 9 of \cite{Lej11}).
$\IZ^2$ has a natural reflection symmetry along any line $l$ going through a set of dual vertices. Such a line splits $\IZ^2$ in two halves, $\IZ^2_{+}$
and $\IZ^2_{-}$. We also split accordingly the dual graph $(\IZ^2)^*$ in two halves, $(\IZ^2_{+})^*$ and $(\IZ^2_{-})^*$, such that
$(\IZ^2_{+})^* \cap (\IZ^2_{-})^* = V^*_l$, where $V^*_l$ is the set of vertices of $(\IZ^2)^*$ that lie on $l$.

Let ${\mathcal F}^+$ (respectively, ${\mathcal F}^-$) denote the set of all functions of the spin variables
$(\sigma(z))_{z \in (\IZ^2_{+})^*}$ (respectively, $(\sigma(z))_{z \in (\IZ^2_{-})^*}$).
Let $\vartheta$ be the reflection with respect to $l$. With a slight abuse of notation, it induces a map $\vartheta : {\mathcal F}^{\pm} \to {\mathcal F}^{\mp}$ given by $\vartheta f(\sigma)= f(\sigma \circ \vartheta)$, $f\in {\mathcal F}^{\pm} $. Let $\langle \cdot \rangle_{\IZ^2}^{\kappa}$
denote expectation with respect to the infinite volume loop soup with intensity measure $\lambda \tilde\mu^{\kappa}$.
\begin{proposition} \label{prop:reflection_positivity}
For all $\kappa>0$ and $\lambda=1/2$, the infinite volume massive spin field $\sigma^{\kappa}$ is reflection-positive, i.e.,
for all functions $f,g \in {\mathcal F}^+$, $\langle f \vartheta g \rangle_{\IZ^2}^{\kappa} = \langle g \vartheta f \rangle_{\IZ^2}^{\kappa}$ and $\langle f \vartheta f \rangle_{\IZ^2}^{\kappa} \geq 0$.
\end{proposition}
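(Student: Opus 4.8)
The plan is to reduce the statement to a positive–semidefiniteness property of a finite matrix and then extract that property from the reflection invariance of the massive loop measure together with the Gaussian nature of the soup at $\lambda=\tfrac12$. Throughout I write $\sigma_A=\prod_{z\in A}\sigma(z)$ for a finite set $A$ of dual vertices, and I observe that every $f\in\mathcal F^+$ can be approximated by cylinder functions, each of which, being a function of $\pm1$–valued spins, is multilinear: $f=\sum_A c_A\,\sigma_A$ with $A\subseteq(\IZ^2_+)^*$ finite. Since $\vartheta\sigma(z)=\sigma(\vartheta z)$, one has $\vartheta f=\sum_B c_B\,\sigma_{\vartheta B}$ with $\vartheta B\subseteq(\IZ^2_-)^*$, so that
\[
\langle f\,\vartheta f\rangle^{\kappa}_{\IZ^2}=\sum_{A,B}c_A c_B\,K(A,B),\qquad K(A,B):=\langle \sigma_A\,\sigma_{\vartheta B}\rangle^{\kappa}_{\IZ^2}.
\]
Thus the positivity assertion is equivalent to the matrix $K$ being positive semidefinite, while the symmetry assertion follows at once from $K(A,B)=K(B,A)$, which is immediate from the $n$–point formula \eqref{eq:n-point-function} (valid verbatim for $\tilde\mu^{\kappa}$ in infinite volume, the relevant masses being finite) together with the $\vartheta$–invariance of $\tilde\mu^{\kappa}$, a measure depending on loops only through their time length.

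For the positivity I would decompose the soup according to the line $l$. Since $l$ passes through dual vertices, no primal vertex lies on $l$, and the loops split into three independent Poissonian families: those contained in $\IZ^2_+$, those contained in $\IZ^2_-$, and the crossing loops using a primal edge straddling $l$. A loop contained in $\IZ^2_+$ lies strictly to one side of every dual vertex of $(\IZ^2_-)^*\cup V^*_l$ and therefore has winding number zero around all of them; symmetrically with $+$ and $-$ interchanged. Applying the exponential (Campbell) formula separately to the three families and using this vanishing of windings, one factorises
\[
K(A,B)=a(A)\,a(B)\,C(A,B),\qquad C(A,B)=\langle \sigma^0_A\;\vartheta\sigma^0_B\rangle^{\kappa,0},
\]
where $a(A)=\exp\big[\lambda\,\tilde\mu^{\kappa}_{+}\big((-1)^{\sum_{z\in A}N_\gamma(z)}-1\big)\big]$ collects the one–sided loops (using $\tilde\mu^{\kappa}_-=\vartheta_*\tilde\mu^{\kappa}_+$ and that $\vartheta$ flips the sign of each winding, hence preserves its parity, giving the same $a$ on both factors), and $\sigma^0$ denotes the spin field generated by the crossing loops alone. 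The rank–one factor $a(A)a(B)$ is positive semidefinite, and a Hadamard product of positive–semidefinite matrices is positive semidefinite, so it suffices to prove that $C$ is positive semidefinite, i.e.\ that the crossing–loop spin field $\sigma^0$ is itself reflection positive.

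This last point is the crux, and it is where $\lambda=\tfrac12$ enters. By the displayed factorisation, $\sum_{A,B}c_A c_B\,C(A,B)=\langle G\,\vartheta G\rangle^{\kappa,0}$ with $G=\sum_A c_A\sigma^0_A\in\mathcal F^+$, so positive semidefiniteness of $C$ is literally reflection positivity of $\sigma^0$. To obtain it I would invoke the Gaussian structure available at $\lambda=\tfrac12$: by Le Jan's isomorphism the occupation field of the massive soup equals $\tfrac12(\varphi^{\kappa})^2$ for the massive DGFF $\varphi^{\kappa}$, and Lupu's coupling upgrades this to an identification of the full winding/sign data of $\sigma^0$ with a measurable function of $\varphi^{\kappa}$ that commutes with $\vartheta$ and respects the $\pm$ splitting of the lattice. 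The massive DGFF is a Gaussian field whose only interaction across $l$ is the single reflection–symmetric ferromagnetic nearest–neighbour coupling, hence is reflection positive by the classical Gaussian argument, and $\sigma^0$ inherits reflection positivity. Equivalently, one may argue directly by cutting each crossing loop at its intersections with $l$ into $+$– and $-$–excursions and using the $\vartheta$–invariance of $\tilde\mu^{\kappa}$ together with the reflection positivity of the massive random–walk transfer across $l$.

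The main obstacle is exactly this crossing–loop step: the non–crossing loops contribute only the harmless rank–one factor, but the crossing loops couple the two half–planes, and a generic reflection–invariant loop measure is \emph{not} reflection positive. The special value $\lambda=\tfrac12$ is what rescues the argument, being the unique intensity at which the winding field admits a Gaussian (massive GFF) representation for which reflection positivity is classical; for general $\lambda$ no such representation exists and the scheme breaks. Two technical points require care, which I would dispatch first: that only loops with nonzero winding around the finitely many marked faces contribute, so all exponents stay finite despite the infinite total mass of crossing loops; and that the dual vertices $V^*_l$ shared by the two halves receive winding contributions only from crossing loops, so that the factorisation of $K$ is clean.
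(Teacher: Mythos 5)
Your setup is sound: the reduction to positive semidefiniteness of the matrix $K(A,B)=\langle \sigma_A\sigma_{\vartheta B}\rangle^{\kappa}_{\IZ^2}$, the symmetry $K(A,B)=K(B,A)$ via \eqref{eq:n-point-function} and the $\vartheta$-invariance of $\tilde\mu^{\kappa}$, and the splitting of the soup into the two one-sided families plus the crossing loops, with the one-sided loops contributing only a rank-one Hadamard factor $a(A)a(B)$ (Schur product theorem), are all correct. But the entire difficulty of the proposition is concentrated in the remaining step --- positive semidefiniteness of $C(A,B)$, i.e.\ reflection positivity of the crossing-loop field $\sigma^0$ --- and there your argument has a genuine gap. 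The appeal to Le Jan's isomorphism and Lupu's coupling does not deliver what you claim. First, $\sigma$ is \emph{not} a measurable function of $\varphi^{\kappa}$: as Theorem~\ref{thm:DGFFIsing} shows, one needs $|\varphi|$ \emph{together with} an independent dual Ising layer, and that Ising layer is a global Gibbs measure whose spin at a face on the $+$ side depends on couplings on both sides of $l$; so $\sigma|_{(\IZ^2_+)^*}$ is not of the form (function of $\varphi|_{\IZ^2_+}$ and randomness attached to the $+$ side), which is exactly the locality one needs to transfer reflection positivity from $\varphi$ to $\sigma$. Second, conditioning on $|\varphi|$ does not help: for a fixed realization the dual couplings $-\frac12\log\tanh|\varphi_x\varphi_y|$ are not reflection-symmetric, so the conditional Ising law need not be reflection positive, and reflection positivity is not preserved under mixing over a non-symmetric random environment. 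Third, the isomorphism concerns the occupation field of the \emph{full} soup, not the crossing loops in isolation, so it does not directly address $\sigma^0$ at all.

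The correct route is the one you relegate to a single closing sentence: cut the crossing loops into excursions at $l$ and use the Markov/restriction property of the random walk loop soup, which shows that, conditionally on a reflection-invariant crossing trace on $l$, the $+$- and $-$-excursion data are independent and exchanged by $\vartheta$, whence $\langle f\vartheta f\rangle^{\kappa}_{\IZ^2}$ is a mixture of squares. This is precisely the paper's proof, which is given by reference: it invokes the Markov property of simple random walk loops from Werner's work (equivalently, Le Jan's Markov property of the edge-occupation field) and follows the argument of Lemma~8.1 of the non-backtracking loop soup paper of Camia and Lis. Developing that step --- in particular, verifying that the winding of a crossing loop around a point of $(\IZ^2_+)^*$ is measurable with respect to the $+$-excursions and the crossing trace, and that the conditional gluing of excursions into loops factorizes across $l$ (this is where $\lambda=\tfrac12$ enters) --- is the actual content of the proof, and it is missing from your proposal.
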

\begin{proof}
The proof uses the Markov property of simple random walk loops described in~\cite{Wer2016}, and is analogous to that of Lemma~8.1 of \cite{CamLis} where a spin field for the non-backtracking
loop soup was defined. An alternative way to prove the result is to notice that the spin field is a function of the edge-occupation field of the loop soup that was shown to have a Markov property by
Le Jan~\cite{Lej16}.
\end{proof}

\section{Convergence of correlation functions} \label{sec:correlation_functions}

In this section we prove that the correlation functions of the spin fields converge, in the scaling limit, to conformally covariant functions. These are the functions
studied in \cite{CGK} and, as we explain in the next section, they are the correlation functions of the corresponding continuum winding fields.  We also study
the effect of a small perturbation of the boundary of the domain on the value of the field.


To state our results, we need to introduce the Brownian loop measure and the Brownian loop soup which are continuum analogs of the notions from the previous sections. 
A \emph{(rooted) loop} $\gamma$ of time length $t_{\gamma}$ is a continuous function $\gamma : [0,t_{\gamma}] \to \IC$ with $\gamma(0)=\gamma(t_\gamma)$.
Given a domain $D \subset \IC$, a conformal map $f : D \to \IC$, and a loop $\gamma$ in~$D$,
we define $f \circ \gamma$ to be the loop $f (\gamma)$ with time parametrization given by the Brownian scaling $f \circ \gamma (s) = f(\gamma(t))$, where
\[
 \quad s= s(t) = \int_0^{t}  | f'(\gamma(u))|^2 du,
\]
and $t_{f\circ \gamma} = s(t_{\gamma})$.
In particular, if $\Phi_{a,b}(w)=aw+b$, $a \neq 0$, then $\Phi_{a,b} \circ \gamma$ is the loop $\gamma$ scaled by $|a|$, rotated around the origin by $\arg a$ and
shifted by~$b$, with time parametrization $s(t)= |a|^2t$, and time length $t_{\Phi_{a,b}\circ \gamma} = |a|^2 t_{\gamma} $.

By $\mu^{br}$ we denote the \emph{complex Brownian bridge} measure, i.e., a probability measure on loops rooted at~$0$ of time length $1$ induced by the process $B_t=W_t-tW_1$, ${t\in[0,1]}$, 
where $W_t$ is a standard complex Brownian motion starting at $0$. For $z\in \IC$ and $t>0$, by $\mu^{br}_{z,t}$ we denote the complex Brownian bridge measure on loops rooted at $z$ of time length $t$, i.e.,
the measure
\[
\mu^{br}_{z,t} = \mu^{br} \circ \Phi_{\sqrt{t},z}^{-1}.
\]
The \emph{Brownian loop measure} is a $\sigma$-finite measure on loops given by
\[
\mu= \int_{\IC} \int_0^{\infty} \frac{1}{2\pi t^2} \mu^{br}_{z,t} dt dA(z).
\]
This measure is clearly translation invariant and it is easy to check that it is scale invariant. This means that $\mu = \mu \circ \Phi_{a,b}$ for any $a>0$ and $b\in\IC$.
Since $\mu$ inherits rotation invariance from the complex Brownian motion, we actually have that $\mu = \mu \circ \Phi_{a,b}$ for any $a,b\in\IC$, $a\neq 0$.
To recover the full \emph{conformal invariance} of Brownian motion one has to consider $\mu$ as a measure on \emph{unrooted loops}, i.e., equivalence classes of loops 
under the relation $\gamma \sim  \theta_r \gamma $ for every $r \in \IR$, where $\theta_r \gamma(s) = \gamma(s+r \ \text{mod} \ t_{\gamma})$.

If $D$ is a domain, then by $\mu_D$ we denote the measure $\mu$ restricted to loops which stay in $D$. Let $D,D'$ be two simply connected domains, and let $f: D\to D'$ 
be a conformal equivalence. The full conformal invariance of $\mu$ is expressed by the fact that $\mu_{D'} \circ f =\mu_D$.
A proof of this can be found in~\cite{Law}.

The \emph{Brownian loop soup} $\BLS_D=\BLS_{D,\lambda}$ with intensity parameter $\lambda >0$ is a Poissonian collection of loops with intensity measure $\lambda  \mu_D$.
We write $\BLS=\BLS_{\IC}$. The Brownian loop soup inherits all invariance properties of the Brownian loop measure. In particular, $\BLS_{D'}$ has 
the same distribution as $f[\BLS_{D}]$.

Let $r(D,z)$ denote the conformal radius of $D$ seen from $z$. For a mesh size $a>0$, let $D_a$ be the largest discrete domain in $a\IZ^2$ that is contained in~$D$,
and $D^*_a$ its dual. For $z \in D$, let $z^a \in D^*_a$ be a dual vertex closest to $z$ (chosen in any deterministic way if there is more than one such vertex).
Let $\Delta = \lambda/8$.
\begin{theorem}[Convergence of the $n$-point function]\label{thm:npoint} Let $D$ be a simply connected bounded Jordan domain.
The limit 
\[
\lim_{a\to 0} a^{-2n\Delta} \langle \sigma(z^a_1) \ldots \sigma(z^a_n) \rangle_{D_a} =: \psi_D(z_1,\ldots,z_n)^{2\lambda}
\]
exists, and there is a positive constant $c<\infty$ such that
\begin{align*}
& \psi_D(z_1,\ldots,z_n) = c^n \prod_{j=1}^n r(D,z_j)^{-1/8} \exp[ \mu_D(\gamma : N_{\gamma}(z_j) \text{ is odd}, | \overline\gamma \cap \{z_1,\ldots,z_n\} | \geq 2 ) ] \\
& \qquad \qquad \qquad \qquad \quad \exp\Big[ - \mu_D\big(\gamma: \textstyle \sum_{j=1}^n N_{\gamma}(z_i) \text{ is odd},| \overline\gamma \cap \{ z_1,\ldots,z_n\} |\geq 2\big) \Big]. \\
\end{align*}
Moreover, if $f:D\to D'$ is a conformal map, then
\[
\psi_{D'}(f(z_1),\ldots,f(z_n)) = \psi_D(z_1,\ldots,z_n) \prod_{j=1}^n |f'(z_j)|^{-1/8}.
\]
\end{theorem}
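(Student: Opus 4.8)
The plan is to reduce everything to the exact identity for the discrete $n$-point function already established in the proof of Proposition~\ref{prop:griffiths}, namely
\[
\langle \sigma(z^a_1)\cdots\sigma(z^a_n)\rangle_{D_a} = \exp\Big[-2\lambda\, \tilde\mu_{D_a}\big(\gamma:\textstyle\sum_{j=1}^n N_\gamma(z^a_j)\text{ is odd}\big)\Big],
\]
so that the whole theorem becomes a statement about the $a\to0$ asymptotics of $M_a := \tilde\mu_{D_a}(\gamma:\sum_j N_\gamma(z^a_j)\text{ odd})$. First I would split the loops according to how many of the marked points they enclose. Since for small $a$ any loop enclosing two distinct marked points has diameter bounded below by a positive constant, only loops enclosing a single point can be microscopic. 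Writing the single-point part as a difference, this gives
\begin{align*}
M_a = \sum_{j=1}^n \tilde\mu_{D_a}\big(N_\gamma(z^a_j)\text{ odd}\big)
&- \sum_{j=1}^n \tilde\mu_{D_a}\big(N_\gamma(z^a_j)\text{ odd},\,|\overline\gamma\cap\{z^a_1,\dots,z^a_n\}|\ge 2\big) \\
&+ \tilde\mu_{D_a}\big(\textstyle\sum_{j} N_\gamma(z^a_j)\text{ odd},\,|\overline\gamma\cap\{z^a_1,\dots,z^a_n\}|\ge 2\big),
\end{align*}
isolating one genuinely divergent family (the first sum) and two families of macroscopic loops.

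The macroscopic terms are the easy part. Each is carried by loops of diameter bounded below, so the relevant events are continuity events for the weak convergence of the rooted random walk loop measure $\tilde\mu_{D_a}$ to the Brownian loop measure $\mu_D$ (winding numbers and the enclosure relation are continuous at loops that avoid the fixed points $z_1,\dots,z_n$). I would therefore argue that, as $a\to0$, the second sum converges to $\sum_j \mu_D(N_\gamma(z_j)\text{ odd},\,|\overline\gamma\cap\{z_1,\dots,z_n\}|\ge2)$ and the last term converges to $\mu_D(\sum_j N_\gamma(z_j)\text{ odd},\,|\overline\gamma\cap\{z_1,\dots,z_n\}|\ge2)$, which are precisely the two Brownian-loop-measure exponents in the statement. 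Here the hypothesis that $D$ is a bounded Jordan domain is used so that $D_a$ approximates $D$ in the Carath\'eodory sense, no mass escapes near $\partial D$, and $z^a_j\to z_j$.

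The single-point sum is the heart of the matter, and its analysis is the step I expect to be the main obstacle, since $\tilde\mu_{D_a}(N_\gamma(z^a_j)\text{ odd})$ diverges logarithmically and has no direct continuum counterpart. The goal is the sharp expansion
\[
\tilde\mu_{D_a}\big(N_\gamma(z^a_j)\text{ odd}\big) = \tfrac18\log(1/a) + \beta + \tfrac18\log r(D,z_j) + o(1), \qquad a\to 0,
\]
for a universal lattice constant $\beta$. I would obtain it by a further split at a fixed mesoscopic scale $\delta$: loops of diameter $<\delta$ around $z^a_j$ do not feel the boundary, so their measure equals a full-plane lattice quantity whose $\delta/a\to\infty$ behaviour is $\tfrac18\log(\delta/a)+C_2+o(1)$; loops of diameter $\ge\delta$ converge to $\mu_D(N_\gamma(z_j)\text{ odd},\text{ diam}\ge\delta)$, whose $\delta\to0$ behaviour is $\tfrac18\log(r(D,z_j)/\delta)+C_1+o(1)$ by conformal invariance of $\mu_D$ together with the explicit computation in a disk (the conformal radius enters because mapping $D$ to a disk centred at $z_j$ rescales the cutoff by $|f'(z_j)|=r(D,z_j)^{-1}$). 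The $\log\delta$ terms cancel and $\beta=C_1+C_2$. Pinning down that the divergence coefficient is really $\tfrac18$ in \emph{both} regimes, i.e.\ that the lattice constant exists and the discrete odd-winding measure matches the continuum one to this precision, is exactly where I would invoke the loop-erased-walk observable convergence of Bene\v{s}, Lawler and Viklund~\cite{BLV}, which supplies the required sharp discrete-to-continuum comparison with an explicit constant.

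Finally, I would assemble the pieces. Multiplying $e^{-2\lambda M_a}$ by $a^{-2n\Delta}=a^{-n\lambda/4}$ cancels the $\tfrac18\log(1/a)$ divergences (since $2\lambda\cdot n\cdot\tfrac18=2n\Delta$), leaving the limit $\psi_D(z_1,\dots,z_n)^{2\lambda}$ with $c=e^{-\beta}$ and exactly the stated product of conformal radii times the two Brownian-loop-measure exponentials. The conformal covariance is then immediate from this closed form: the enclosure and odd-winding events are topological, hence invariant under the conformal map $f$, so by $\mu_{D'}\circ f=\mu_D$ the two exponential factors are unchanged, while $r(D',f(z_j))=|f'(z_j)|\,r(D,z_j)$ turns $\prod_j r(D',f(z_j))^{-1/8}$ into $\prod_j r(D,z_j)^{-1/8}|f'(z_j)|^{-1/8}$, producing exactly the factor $\prod_j|f'(z_j)|^{-1/8}$.
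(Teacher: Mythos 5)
Your proposal is correct and follows essentially the same route as the paper: the same decomposition of $\tilde\mu_{D_a}(\sum_j N_\gamma(z_j^a)\text{ odd})$ into one-point terms plus macroscopic corrections (the paper writes it as the ratio of the $n$-point function to the product of one-point functions), the Lawler--Trujillo Ferreras coupling for the macroscopic loop-measure terms, the Bene\v{s}--Lawler--Viklund result for the one-point asymptotics, and conformal invariance of $\mu_D$ together with $r(D',f(z))=|f'(z)|\,r(D,z)$ for the covariance. The only difference is presentational: Theorem~1.4 of \cite{BLV} already delivers the full expansion $\tilde\mu_{D_a}(N_\gamma(z^a)\text{ odd})=\tfrac18\log\big(a^{-1}r(D_a,z^a)\big)+\tfrac12\log c_1+o(1)$ with the conformal radius built in, so your intermediate mesoscopic splitting is not needed.
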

\begin{remark}
If $n=2$, then $\psi_D$ can be expressed as
\begin{align*}
&\psi_D(z_1,z_2) = c^2 r(D,z_1)^{-1/8} r(D,z_2)^{-1/8} \exp[ 2 \mu_D(\gamma: N_{\gamma}(z_1) \text{ and } N_{\gamma}(z_2) \text{ are odd})].
\end{align*}
\end{remark}

\begin{proposition}[Boundary perturbations]\label{boundaryperturbations}
Let $D'\subset \mathds{D}$ be a simply connected subset of the unit disk containing 0. For a mesh size $a>0$, let $\sigma_{\mathds{D}_a}(z^a)$ be the spin field generated by $\mathcal{\tilde L}_{\mathds{D}_a}$, and let $\sigma_{D'_a}(z^a)$ be the spin field generated by the loops in $\mathcal{\tilde L}_{\mathds{D}_a}$ that stay in $D'_a$. In this coupling,
\[
\lim_{a\to 0} \IP( \sigma_{\mathds{D}_a}(0^a) = \sigma_{D'_a}(0^a)) =: \chi(r(D',0))
\]
exists, and
\[
\frac{d \chi(r)}{dr} \bigg|_{r=1} = \frac{\lambda}{8},
\]
which equals the scaling dimension $\Delta$ of the winding field.
\end{proposition}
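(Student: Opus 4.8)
The plan is to reduce the agreement probability to the one‑point functions already controlled by Theorem~\ref{thm:npoint}, exploiting the Poissonian structure of the loop soup so that the mesh‑dependent prefactors cancel. First I would rewrite the event in terms of winding. Since $\sigma$ takes values in $\{\pm1\}$,
\[
\IP\big(\sigma_{\mathds{D}_a}(0^a)=\sigma_{D'_a}(0^a)\big)=\tfrac12\big(1+\langle \sigma_{\mathds{D}_a}(0^a)\,\sigma_{D'_a}(0^a)\rangle_{\mathds{D}_a}\big),
\]
where the expectation is taken over the single loop soup $\mathcal{\tilde L}_{\mathds{D}_a}$ used to build both fields. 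The product $\sigma_{\mathds{D}_a}(0^a)\sigma_{D'_a}(0^a)$ equals $(-1)$ raised to the winding around $0^a$ of exactly those loops of $\mathcal{\tilde L}_{\mathds{D}_a}$ that do \emph{not} stay in $D'_a$, so by the exponential formula for Poisson processes already used in~\eqref{eq:n-point-function},
\[
\langle \sigma_{\mathds{D}_a}(0^a)\,\sigma_{D'_a}(0^a)\rangle_{\mathds{D}_a}=\exp\!\big[-2\lambda\,\tilde\mu_{\mathds{D}_a}(\gamma:N_\gamma(0^a)\text{ odd},\ \gamma\not\subset D'_a)\big].
\]

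Next I would turn this into a ratio of one‑point functions. Because $D'_a\subset\mathds{D}_a$, the loops contained in $\mathds{D}_a$ split into those contained in $D'_a$ and those not, so the mass in the exponent is $\tilde\mu_{\mathds{D}_a}(\text{odd})-\tilde\mu_{D'_a}(\text{odd})$; moreover, by thinning, the loops of $\mathcal{\tilde L}_{\mathds{D}_a}$ that stay in $D'_a$ have the law of $\mathcal{\tilde L}_{D'_a}$, so that $\langle\sigma_{D'_a}(0^a)\rangle_{D'_a}$ really is the genuine one‑point function on $D'_a$. Comparing with the $n=1$ case of~\eqref{eq:n-point-function} yields the exact finite‑mesh identity
\[
\langle \sigma_{\mathds{D}_a}(0^a)\,\sigma_{D'_a}(0^a)\rangle_{\mathds{D}_a}=\frac{\langle \sigma_{\mathds{D}_a}(0^a)\rangle_{\mathds{D}_a}}{\langle \sigma_{D'_a}(0^a)\rangle_{D'_a}}.
\]
This is the crux: the factorization makes the prefactor $a^{2\Delta}$ cancel between numerator and denominator before any limit is taken.

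Then I would apply Theorem~\ref{thm:npoint} with $n=1$ to each domain. For a single point the two exponential factors in $\psi_D$ are absent (they require $|\overline\gamma\cap\{z_1\}|\ge 2$), so $\psi_D(0)=c\,r(D,0)^{-1/8}$. Since $r(\mathds{D},0)=1$, letting $a\to0$ gives
\[
\lim_{a\to0}\frac{\langle \sigma_{\mathds{D}_a}(0^a)\rangle_{\mathds{D}_a}}{\langle \sigma_{D'_a}(0^a)\rangle_{D'_a}}=\Big(\frac{\psi_{\mathds{D}}(0)}{\psi_{D'}(0)}\Big)^{2\lambda}=r(D',0)^{\lambda/4},
\]
so the limit exists, depends on $D'$ only through $r(D',0)$, and $\chi(r)=\tfrac12\big(1+r^{\lambda/4}\big)$. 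Differentiating, $\tfrac{d}{dr}\chi(r)\big|_{r=1}=\tfrac{\lambda}{8}=\Delta$, as claimed.

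The main obstacle is verifying the hypotheses needed to invoke Theorem~\ref{thm:npoint} for $D'$: I must assume (or arrange) that $D'$ is a bounded simply connected Jordan domain, and check that $D'_a$ and the basepoint $0^a$ are exactly the discrete objects the theorem uses, so that both one‑point functions share the leading order $a^{2\Delta}$ with nonzero finite limits $c^{2\lambda}r(\cdot,0)^{-\lambda/4}$; only then is replacing the limit of the quotient by the quotient of the limits legitimate. Everything else is algebraic. As an independent consistency check, the Brownian loop mass appearing in the exponent, $M:=\mu_{\mathds{D}}(\gamma:N_\gamma(0)\text{ odd},\,\gamma\not\subset D')$, should equal $-\tfrac18\log r(D',0)$: by conformal invariance of $\mu$ it reduces to $D'=r\mathds{D}$, and by scale invariance $M=\tfrac18\log(1/r)$, the rate $\tfrac18$ being the Brownian loop measure density of odd‑winding loops whose maximal modulus reaches the unit circle. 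That direct route, however, still requires a genuine winding computation, which is why I would prefer to pass through Theorem~\ref{thm:npoint}.
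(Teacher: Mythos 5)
Your proposal is correct, and it takes a genuinely different route from the paper. The paper works directly with the parity of the Poisson number of odd-winding loops that exit $D'$, writes $\IP(\sigma_{\mathds{D}_a}(0^a)=\sigma_{D'_a}(0^a))=e^{-\lambda m_a}+O((\lambda m_a)^2)$ with $m_a=\tilde\mu_{\mathds{D}_a}(\gamma: N_\gamma(0^a)\text{ odd},\ \gamma\setminus D'\neq\emptyset)$, and then evaluates $\lim_a m_a$ by an explicit winding computation: using the formula $\mu_{\mathds{D}}(\gamma: N_\gamma(0)=2k+1,\ \gamma\setminus D'\neq\emptyset)=\tfrac{1}{2\pi^2(2k+1)^2}\log\tfrac{1}{f'(0)}$ (Proposition 3 of Werner and Lemma A.2 of \cite{CGK}) and summing over $k$ to get $-\tfrac18\log r(D',0)$ — exactly the ``genuine winding computation'' you preferred to avoid. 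Your route instead uses the exact algebraic identity $\IP(\cdot)=\tfrac12\bigl(1+\langle\sigma_{\mathds{D}_a}(0^a)\rangle_{\mathds{D}_a}/\langle\sigma(0^a)\rangle_{D'_a}\bigr)$, which is a correct consequence of the Poisson exponential formula and the restriction property, and then imports the one-point asymptotics from Theorem~\ref{thm:npoint} (i.e.\ from \cite{BLV}). This buys you a sharper, closed-form answer $\chi(r)=\tfrac12(1+r^{\lambda/4})$ — consistent with the paper's $r^{\lambda/8}+O((\log r)^2)$ since the two differ by $\tfrac12(1-r^{\lambda/8})^2$ — at the cost of invoking the heavier BLV input and its hypotheses. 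The regularity caveat you flag is not a real obstruction: the BLV estimate is phrased in terms of $r(D'_a,0^a)$, and $r(D'_a,0^a)\to r(D',0)$ by Carath\'eodory convergence for any simply connected $D'\ni 0$, so the ratio of one-point functions converges without needing $D'$ to be Jordan; this is no worse than the discrete-to-continuum convergence the paper itself uses for the event $\{\gamma\setminus D'\neq\emptyset\}$.
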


\begin{proof}[Proof of Theorem \ref{thm:npoint}]
We will write the $n$-point function as the product of 1-point functions and a factor that only depends on loops that are macroscopic in the scaling limit. We will use a result of Bene{\v{s}}, Lawler and Viklund~\cite{BLV} 
to determine the asymptotics of the 1-point functions. We will then use the coupling between the Brownian loop measure and the random walk loop measure of Lawler and Trujillo Ferreras \cite{LawTru} to compute the limit of the remaining factor.

Using the expression \eqref{eq:n-point-function} from the proof of Proposition \ref{prop:griffiths}, the 1-point function can be written as
\[
\langle \sigma(z^a) \rangle_{D_a} = \exp[-2\lambda \tilde \mu_{D_a} ( \gamma : N_{\gamma}(z^a) \text{ is odd}) ].
\]
Hence,
\begin{align*}
& \frac{ \langle \sigma(z^a_1) \ldots \sigma(z^a_n) \rangle_{D_a} } { \prod_{j=1}^n \langle \sigma(z^a_j) \rangle_{D_a} } \\
 & =  \exp\Big[ - 2\tilde \mu_{D_a}\big(\gamma: \textstyle \sum_{j=1}^n N_{\gamma}(z^a_j) \text{ is odd},| \overline\gamma \cap \{ z^a_1,\ldots,z^a_n\} |\geq 2\big) \Big]  \\
& \qquad \qquad \qquad \prod_{j=1}^n \frac{\exp[ -2\lambda \tilde \mu_{D_a}(\gamma: N_{\gamma}(z^a_j) \text{ is odd}, \overline\gamma \cap \{z^a_1\ldots,z^a_n\} = \{z^a_j\}) ]}  {\exp[-2\lambda \tilde\mu_{D_a}(\gamma : N_{\gamma}(z^a_j) \text{ is odd})]} \\
& =  \exp\Big[ - 2\tilde \mu_{D_a}\big(\gamma: \textstyle \sum_{j=1}^n N_{\gamma}(z^a_j) \text{ is odd},| \overline\gamma \cap \{ z^a_1,\ldots,z^a_n\} |\geq 2\big) \Big]  \\
&\qquad \qquad \qquad \prod_{j=1}^n \exp[ 2\lambda\tilde\mu_{D_a}(\gamma : N_{\gamma}(z^a_j) \text{ is odd}, | \overline\gamma \cap \{z^a_1,\ldots,z^a_n\} | \geq 2 ) ] \\
& \to   \exp\Big[ - 2 \mu_D\big(\gamma: \textstyle \sum_{j=1}^n N_{\gamma}(z_j) \text{ is odd},| \overline\gamma \cap \{ z_1,\ldots,z_n\} |\geq 2\big) \Big] \\ 
&\qquad \qquad \qquad  \prod_{j=1}^n \exp[ 2\lambda\mu_D(\gamma : N_{\gamma}(z_j) \text{ is odd}, | \overline\gamma \cap \{z_1,\ldots,z_n\} | \geq 2 ) ],
\end{align*}
where the convergence holds in the scaling limit $a\to 0$. To justify the convergence, note that the sets of loops that appear in the last expression only contain loops that cover at least two points of $\{z_1,\ldots,z_n\}$. With probability one, in the Brownian loop soup in $D$ there are only finitely many loops covering at least two points of $\{z_1,\ldots,z_n\}$. The distance between these loops and the points is positive with probability one. Hence, if the Brownian loops are approximated sufficiently well by random walk loops, then the winding numbers of the Brownian loops around $z_1,\ldots,z_n$ will be the same as the corresponding winding numbers of the approximating random walk loops. The convergence now follows from the strong coupling of \cite{LawTru} between the Brownian loop soup measure and the random walk loop soup measure.

To prove convergence of the $n$-point function, it remains to show convergence of the 1-point function. 
Recall that $r(D,z)$ denotes the conformal radius of $D$ seen from $z$. By Theorem 1.4 of \cite{BLV} there exist $u>0$ and $0<c_1<\infty$ such that
\begin{align*}
 a^{-2\Delta} \langle \sigma(z^a_j) \rangle_{D_a} & = a^{-\lambda/4} \exp[-2 \lambda\tilde \mu_{D_a}(\gamma: N_{\gamma}(z^a_j) \text{ is odd})] \\
& = a^{-\lambda/4} \big[ c_1 ( a^{-1} r(D_a,z^a_j) )^{1/4} [ 1 + O (  ( a^{-1} r(D_a,z^a_j) )^{-u} ) ] \big]^{-\lambda} .
\end{align*}
As $a \to 0$, $ r(D_a,z^a_j) \to r(D,z_j)$, which implies that
\[
\lim_{a \to 0} a^{-2\Delta} \langle \sigma(z^a_j) \rangle_{D_a} = \tilde c_1 r(D,z_j)^{-\lambda/4}.
\]
This completes the proof of convergence of the $n$-point function, i.e.\ the first statement of the theorem.

Finally, the conformal covariance of $\psi_D(z_1,\ldots,z_n)$ easily follows from the conformal invariance of the Brownian loop measure and of the conformal radius. 
\end{proof}

\begin{proof}[Proof of Proposition \ref{boundaryperturbations}]
Let the spin fields $\sigma_{\mathds{D}_a}$ and $\sigma_{D'_a}$ be coupled as stated in the proposition. We have that
\begin{align*}
\IP( \sigma_{\mathds{D}_a}(0^a) &= \sigma_{D'_a}(0^a)) \\
& = \IP(\#(\gamma\in \mathcal{\tilde L}_{\mathds{D}_a} : N_{\gamma}(0^a) \text{ is odd}, \gamma \setminus D' \not= \emptyset) = 0) \\
&\quad + \IP(\#(\gamma\in \mathcal{\tilde L}_{\mathds{D}_a} : N_{\gamma}(0^a) \text{ is odd}, \gamma  \setminus D' \not= \emptyset) = 2k \text{ for some } k\geq 1) \\
&= \exp[ - \lambda \tilde \mu_{\mathds{D}_a} ( \gamma: N_{\gamma}(0^a) \text{ is odd}, \gamma  \setminus D' \not= \emptyset ) ] \\
&\quad + O( [\lambda \tilde \mu_{\mathds{D}_a} ( \gamma: N_{\gamma}(0^a) \text{ is odd}, \gamma  \setminus D' \not= \emptyset)]^2 ),
\end{align*}
as $r(D',0)\to 1$. This follows from the fact that the random variable $\#(\gamma\in \mathcal{\tilde L}_{\mathds{D}_a} : N_{\gamma}(0^a) \text{ is odd}, \gamma  \setminus D' \not= \emptyset)$ has a Poisson distribution with mean $\lambda \tilde \mu_{\mathds{D}_a} ( \gamma: N_{\gamma}(0^a) \text{ is odd}, \gamma \setminus D' \not= \emptyset)$.

Let $f:\mathds{D} \to D'$ denote the conformal map from $\mathds{D}$ onto $D'$ such that $f(0)=0$ and $f'(0)>0$. By the convergence of the random walk loop soup to the Brownian loop soup \cite{LawTru}, and using Proposition 3 of \cite{Wer08} and Lemma A.2 from \cite{CGK}, we have
\begin{align*}
 \lim_{a\to 0} \tilde \mu_{\mathds{D}_a} (\gamma : N_{\gamma}(0^a) &\text{ is odd}, \gamma \setminus D' \not= \emptyset) \\
& = \mu_{\mathds{D}} (\gamma : N_{\gamma}(0) \text{ is odd}, \gamma \setminus D' \not= \emptyset) \\
&= \sum_{k=-\infty}^{\infty} \mu_{\mathds{D}} (\gamma : N_{\gamma}(0) =2k+1, \gamma \setminus D' \not= \emptyset) \\
& = \sum_{k=-\infty}^{\infty} \frac{1}{2\pi^2 (2k+1)^2} \log \frac{1}{f'(0)} \\
 &= - \frac{1}{8} \log r(D',0).
\end{align*}
Hence,
\begin{align*}
\lim_{a\to 0} \IP & (\tilde V_{\mathds{D}_a}(0^a) = \tilde V_{D'_a}(0^a))   =r(D',0)^{\lambda/8} + O\Big(\Big[\frac{\lambda}{8} \log r(D',0)\Big]^2\Big) =: \chi(r(D',0))
\end{align*}
exists. The last statement of the proposition follows immediately.
\end{proof}

\section{Brownian loop soup winding fields}
In this section we show that, for values of the intensity $\lambda$ of the BLS that are not too large, but still including e.g.\ the case $\lambda=1/2$ for $\pm 1$-valued fields,
one can construct continuum Euclidean fields (random generalized functions) whose correlation functions are the functions obtained in~\cite{CGK}.

Following~\cite{CGK}, we define a winding field arising from the Brownian loop soup (see also Section 6 of~\cite{Lej11}). We will restrict our attention to bounded domains $D$.
Let $\bar{\gamma}$ be the \emph{hull} of the loop $\gamma$, i.e., the complement of the unique unbounded connected component of the complement of $\gamma$.
Here, as we will often do, we treat $\gamma$ as a subset of $\IC$. We say that $\gamma$ \emph{covers}  $z$ if $z\in  \overline{ \gamma}$.
We will be interested in quantities defined in terms of the \emph{total winding} of all loops of the loop soup around any given point $z$. Since $\BLS$
is scale invariant, $\{ \gamma \in \BLS: \gamma \ \text{covers} \ z, \ \diam \gamma \leq \delta \}$ is infinite almost surely for all $\delta >0$. This forces us to regularize the loop soup so 
that only finitely many loops cover each point. One way to do this is to introduce the \emph{``ultraviolet''} cutoff $\delta$ on the size of the loops by defining
\[
\BLS^{\delta}_D =\{ \gamma \in \BLS_D:  \diam \gamma > \delta \}.
\]
Similarly, by $\mu^{\delta}_D$ we denote the measure $\mu_D$ restricted to loops of diameter larger than $\delta$.
Note that each point $z$ is covered by only finitely many loops from $\BLS^{\delta}_D$ almost surely since the Brownian loop measure of such loops is finite.
We can now define
\begin{align} \label{eq:winddef}
N(z)=N^{\delta}_D(z) = \sum_{\gamma \in \BLS^{\delta}_D} N_{\gamma}(z),  
\end{align}
where $ N_{\gamma}(z)$ is the {winding number} of $\gamma$ around $z$ (if $z\in \gamma$, then we put $N_z(\gamma)=0$). Note that the loops which do not cover $z$ do not contribute to the above sum, and therefore the sum is finite almost surely.
The {winding field} is then defined by
\[
V^{\delta}(z) =V^{\delta}_{\beta}(z) =e^{i \beta N(z)}, \qquad \beta \in [0,\pi],
\]
The correlation functions of this random field were explicitly computed in~\cite{CGK} in the limit as $\delta \to 0$.
In particular, it was proved that the one-point function $\langle V^{\delta}(z) \rangle$ decays like 
$\delta^{2\Delta}$ where
\[
\Delta = \lambda \frac{\beta(2\pi-\beta)}{8\pi^2}.
\]

Note that since $|\delta^{-2\Delta} V_{\beta}^{\delta}(z) | = \delta^{-2\Delta}$ for all $z\in D$, the field $ \delta^{-2\Delta}V_{\beta}^{\delta}$ does not converge as a function on $D$ as $\delta \to 0$. 
Hence, to obtain convergence results, one has to treat $ \delta^{-2\Delta}V_{\beta}^{\delta}$ as an element of a topological space larger than any classical function space. This is usually achieved by thinking of $ \delta^{-2\Delta}V_{\beta}^{\delta}$ as a random distribution, 
i.e., a random continuous functional on some appropriately chosen space of test functions where the action of $ \delta^{-2\Delta}V_{\beta}^{\delta}$ on a test function 
$f$ is given by
\[
  \delta^{-2\Delta}V_{\beta}^{\delta} ( f ) =\delta^{-2\Delta} \int_D  V_{\beta}^{\delta}(z) {f (z)} dz.
\]

A convenient framework describing such functionals is given by Sobolev spaces with negative index, which we here briefly recall, following~\cite{Dub09}.
Let $\mathcal{H}_0^1=\mathcal{H}_0^1(D)$ be the classical Sobolev Hilbert space, i.e., the closure of $C^{\infty}_0(D)$ in the norm
\[
\| f \|^2_{\mathcal{H}_0^1}:= \int_D | \nabla f (z) |^2 dz.
\]
 Let $u_1,u_2,\ldots$  be the eigenfunctions and $\lambda_1 < \lambda_2 \leq \ldots\to\infty$ the respective eigenvalues of the positive definite Laplacian on $D$ with Dirichlet boundary conditions.
We assume that $u_1,u_2,\ldots$ are normalized to have unit norm in $L^2=L^2(D)$. These eigenfunctions form orthogonal bases for both $\mathcal{H}_0^1$ and $L^2$, and if 
$f =\sum_{i=1}^{\infty} a_i u_i \in \mathcal{H}_0^1\subset L^2$, then
\[
\| f \|^2_{\mathcal{H}_0^1}= \sum_{i=1}^{\infty}\lambda_i a_i^2.
\]
One can by analogy define for any $\alpha>0$ the Hilbert space $\mathcal{H}_0^{\alpha}$ as the closure of $C^{\infty}_0(D)$ with respect to the norm
\[
\| f \|^2_{\mathcal{H}_0^{\alpha}}= \sum_{i=1}^{\infty}\lambda_i^{\alpha} a_i^2.
\]
The Sobolev space $\mathcal{H}^{-\alpha}$ is then defined as the Hilbert dual of $\mathcal{H}_0^{\alpha}$, i.e., the space of continuous linear functionals $h$ on $\mathcal{H}_0^{\alpha}$
with norm 
 \[
 \| h \|_{\mathcal{H}^{-\alpha}}  = \sup_{f: \ \| f \|_{\mathcal{H}^{\alpha}_0 }=1} | h(f) |.
 \]
Note that $L^2 \subset \mathcal{H}^{-\alpha}$, where the action of $h = \sum_{i} a_i u_i \in L^2$ on $f \in \mathcal{H}^{\alpha} $ is given by $h(f)= \int_D  h(z)  {f (z)} dz$.
It is easily checked that in this case
\begin{align} \label{eq:norm}
\| h \|^2_{\mathcal{H}^{-\alpha}} = \sum_{i} \frac1{\lambda_i^{\alpha}}{a_i^2}.
\end{align}

In our last main result we address the question asked in~\cite{CGK} about the existence of winding fields as random generalized functions:
\begin{theorem} \label{thm:fieldconv}
Let $D$ be a bounded, simply connected domain with a smooth boundary, and let $\Delta < 1/2$. Then for every $\alpha > 3/2$,
the field $V_{\beta}^{\delta}$ treated as a random distribution converges  as $\delta \to 0$ in
second mean in the Sobolev space $\Sob^{-\alpha}$, i.e.,
there exists a random distribution $V_{\beta} \in \Sob^{-\alpha}$ measurable with respect to $\BLS_D$, such that 
\[
\big\langle \| \delta^{-2\Delta}V_{\beta}^{\delta} -V_{\beta} \|^2_{\Sob^{-\alpha}} \big\rangle_D \to 0  \qquad \text{as } \delta \to 0.
\]
\end{theorem}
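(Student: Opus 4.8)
The plan is to prove convergence in second mean by establishing that the sequence $\delta^{-2\Delta} V_\beta^\delta$ is Cauchy in $L^2(\langle\cdot\rangle_D; \Sob^{-\alpha})$, which, since $\Sob^{-\alpha}$ is a Hilbert space, yields a limit $V_\beta$. The key computational tool is the explicit formula \eqref{eq:norm} for the $\Sob^{-\alpha}$-norm: for a field $h = \sum_i a_i u_i \in L^2$ we have $\|h\|_{\Sob^{-\alpha}}^2 = \sum_i \lambda_i^{-\alpha} a_i^2$, and the coefficients are $a_i = \int_D h(z) u_i(z)\,dz$. The strategy is therefore to pass the expectation inside the sum and reduce everything to integrals of two-point correlation functions of the winding field against the eigenfunctions.

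First I would write, for $\delta > \delta'$, the difference $h_{\delta,\delta'} := \delta^{-2\Delta} V_\beta^\delta - \delta'^{-2\Delta} V_\beta^{\delta'}$ (both are genuine functions on $D$, hence in $L^2 \subset \Sob^{-\alpha}$), and compute
\[
\big\langle \| h_{\delta,\delta'} \|_{\Sob^{-\alpha}}^2 \big\rangle_D = \sum_i \frac{1}{\lambda_i^\alpha} \int_D \int_D u_i(z) u_i(w)\, \big\langle h_{\delta,\delta'}(z)\, \overline{h_{\delta,\delta'}(w)} \big\rangle_D \, dz\, dw,
\]
so that the whole estimate hinges on controlling the two-point function $\langle V_\beta^\delta(z) \overline{V_\beta^{\delta'}(w)} \rangle_D$ as $\delta, \delta' \to 0$. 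Using the Poissonian structure of the loop soup exactly as in \eqref{eq:n-point-function}, this correlation is an exponential of Brownian loop measures of loops that wind oddly (with appropriate phase $\beta$) around $z$, around $w$, or around both; the cross term involves only loops covering both $z$ and $w$. The point established in \cite{CGK} is that $\delta^{-2\Delta}\langle V_\beta^\delta(z)\rangle \to C\, r(D,z)^{-2\Delta}$ and that the normalized two-point function converges to a limit of the form $C^2 r(D,z)^{-2\Delta} r(D,w)^{-2\Delta} G_\beta(z,w)$, where the interaction factor $G_\beta(z,w)$ blows up like $|z-w|^{-2\Delta}$ (up to constants) as $w \to z$. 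Thus the scaling dimension governing the short-distance singularity is exactly $\Delta$, and I would extract from the \cite{CGK} computations the uniform bound
\[
\big| \big\langle \delta^{-2\Delta} V_\beta^\delta(z)\, \overline{\delta^{-2\Delta} V_\beta^\delta(w)} \big\rangle_D \big| \leq C\, |z-w|^{-2\Delta},
\]
valid uniformly in $\delta$, together with pointwise convergence of this kernel (off the diagonal) to a limiting kernel $K(z,w)$.

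With such a bound in hand the argument splits into two pieces. For convergence (the Cauchy property), I would use dominated convergence on the expression for $\langle \|h_{\delta,\delta'}\|^2\rangle_D$: the pointwise-in-$(z,w)$ convergence of the correlation kernels forces the integrand to zero, and domination is supplied by the singular bound $C|z-w|^{-2\Delta}$, which is integrable against the smooth, bounded eigenfunctions precisely because $2\Delta < 1 < 2$ so that $|z-w|^{-2\Delta}$ is locally integrable on $D \times D \subset \IR^2 \times \IR^2$. For summability of the series over $i$ one uses Weyl's law $\lambda_i \sim c\, i$ on a smooth planar domain together with the sup-norm bounds on eigenfunctions, so that the coefficient integrals are controlled and $\sum_i \lambda_i^{-\alpha}\|u_i\|_\infty^2 < \infty$ once $\alpha > 3/2$ (here the smooth boundary hypothesis enters, guaranteeing the eigenfunction estimates $\|u_i\|_\infty = O(\lambda_i^{1/2})$). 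The condition $\alpha > 3/2$ is thus exactly what makes the diagonal contribution $\sum_i \lambda_i^{-\alpha} \int\int u_i(z)u_i(w)|z-w|^{-2\Delta}\,dz\,dw$ finite. Finally, the measurability of the limit $V_\beta$ with respect to $\BLS_D$ follows because each $\delta^{-2\Delta}V_\beta^\delta$ is $\BLS_D$-measurable and $\Sob^{-\alpha}$-limits preserve measurability.

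\emph{The main obstacle} I anticipate is not the Hilbert-space abstraction but the analytic control of the two-point function uniformly in the cutoff: I must show that the short-distance blow-up of $\langle V_\beta^\delta(z)\overline{V_\beta^\delta(w)}\rangle$ is \emph{no worse} than $|z-w|^{-2\Delta}$ with a constant \emph{independent of $\delta$}, since the convergence results in \cite{CGK} are stated in the $\delta\to 0$ limit rather than uniformly. Establishing this requires estimating the Brownian loop measure of loops of diameter exceeding $\delta$ that wind oddly around both $z$ and $w$ and bounding it above by the corresponding full ($\delta = 0$) loop-measure quantity plus controlled errors; the monotonicity of the loop measure in $\delta$ helps, but care is needed because the exponent $\Delta = \lambda\beta(2\pi-\beta)/8\pi^2$ and the hypothesis $\Delta < 1/2$ must be used exactly to keep $|z-w|^{-2\Delta}$ integrable in the plane. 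Once this uniform two-point bound is secured, the rest is a routine dominated-convergence and Weyl's-law argument.
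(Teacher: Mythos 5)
Your overall architecture coincides with the paper's: expand $\|\cdot\|_{\Sob^{-\alpha}}^2$ in the Dirichlet eigenbasis, pull the loop-soup expectation inside, control the sum over $i$ by the sup-norm bound on eigenfunctions together with Weyl's law (whence $\alpha>3/2$), reduce everything to convergence of the rescaled two-point function in $L^1(D\times D)$, and conclude by completeness of $L^2(\Omega;\Sob^{-\alpha})$. However, the key analytic input you posit --- a uniform bound $|\langle\delta^{-2\Delta}V^\delta_\beta(z)\,\overline{\delta^{-2\Delta}V^{\delta'}_\beta(w)}\rangle_D|\le C|z-w|^{-2\Delta}$ --- is false, and your explanation of where $\Delta<1/2$ enters is therefore wrong. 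Two singularities are missing. First, the one-point function satisfies $\delta^{-2\Delta}\langle V_\beta^\delta(z)\rangle_D\to c\,r(D,z)^{-2\Delta}$, which blows up as $z\to\partial D$; any correct uniform bound must therefore contain a factor like $(\dist(z,\partial D)\,\dist(w,\partial D))^{-2\Delta}$ even for $z,w$ far apart, and this boundary contribution is exactly what makes the threshold $\Delta<1/2$ sharp: the paper's Proposition~\ref{prop:2pointfunc} proves the limiting kernel is \emph{not} in $L^1(D\times D)$ when $\Delta\ge 1/2$ using only pairs with $|z-w|$ bounded below, whereas your kernel $C|z-w|^{-2\Delta}$ is integrable on $D\times D$ for every $\Delta<1$, so it cannot be a correct upper bound. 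Second, on the diagonal the true blow-up is $|z-w|^{-4\Delta}$, not $|z-w|^{-2\Delta}$: the correct dominating kernel is $(m_{z,w}m_{w,z})^{-2\Delta}$ with $m_{z,w}=\dist(z,\partial D\cup\{w\})$, and near the diagonal $m_{z,w}m_{w,z}\approx|z-w|^2$, so integrability again needs $4\Delta<2$. Verifying $\iint_{D\times D}(m_{z,w}m_{w,z})^{-2\Delta}\,dz\,dw<\infty$ is a genuine piece of work (the paper maps to the unit disk, invokes the Koebe quarter theorem and boundary regularity of the derivative of the uniformizing map, and checks the one-dimensional boundary integral $\int_0^1(1-r)^{-2\Delta}\,dr<\infty$); your proposal contains no substitute for it.

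Two further points. The obstacle you flag --- uniformity in $\delta$ of the two-point estimates --- in fact dissolves: the rescaled two-point function is \emph{exactly constant} once $\delta<m_{z,w}$ and $\delta'<m_{w,z}$ (equation \eqref{eq:limitonepoint}), because loops of diameter below these thresholds either avoid one of the two points, in which case their contribution is computed in closed form by Lemma~3.2 of \cite{CGK}, or cannot cover both points at all; so pointwise convergence off the diagonal is an identity, not an estimate. Finally, a small but consequential slip: the sup-norm bound for Dirichlet eigenfunctions on a smooth planar domain is $\|u_i\|_{L^\infty}=O(\lambda_i^{1/4})$ (Grieser), not $O(\lambda_i^{1/2})$; with your exponent the series $\sum_i\lambda_i^{-\alpha}\|u_i\|_{L^\infty}^2$ would require $\alpha>2$ rather than the claimed $\alpha>3/2$.
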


We also show the {non-triviality} of winding fields, i.e., that they are not Gaussian.
\begin{proposition} \label{prop:non-triviality}
The conformally covariant functions derived in \cite{CGK} as limits of the winding field $n$-point functions do not satisfy Wick's relations for Gaussian fields. 
\end{proposition}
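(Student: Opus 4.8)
The plan is to exhibit a connected correlation function of the winding field that does not vanish, which is exactly the failure of Wick's relations. A (real or complex) random field is Gaussian if and only if all of its joint cumulants of order at least three vanish identically, i.e.\ the truncated (connected) $n$-point functions are zero for $n\ge 3$. It therefore suffices to show that the connected three-point function of the winding field $V_\beta$ (the spin field $\sigma$ being the case $\beta=\pi$) is not identically zero. I would work from the explicit limiting correlators provided by Theorem~\ref{thm:npoint} and its evident extension to general $\beta$.

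First I would record the multiplicative structure of the moments. Write $w_\gamma(z)=e^{i\beta N_\gamma(z)}$ and, for a finite index set $B$, set $I_B=\lambda\int_D\prod_{j\in B}\big(w_\gamma(z_j)-1\big)\,d\mu_D(\gamma)$. The Poissonian exponential formula already used in \eqref{eq:n-point-function}, together with $\prod_{j\in A}w_j-1=\sum_{\emptyset\neq B\subseteq A}\prod_{j\in B}(w_j-1)$, gives for the limiting renormalized $|A|$-point function
\[
G_A:=\Big\langle\textstyle\prod_{j\in A}V_\beta(z_j)\Big\rangle=\Big(\prod_{j\in A}c_j\Big)\exp\Big[\sum_{B\subseteq A,\;|B|\ge2}I_B\Big],
\]
where $c_j>0$ is the finite renormalized one-point function and each $I_B$ with $|B|\ge2$ is finite, since $\prod_{j\in B}(w_j-1)$ vanishes unless $\gamma$ winds around at least two of the distinct points $z_j$, which forces $\diam\gamma$ to be bounded below. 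Substituting this into the truncated three-point function
\[
u_3=G_{123}-G_{12}G_3-G_{13}G_2-G_{23}G_1+2\,G_1G_2G_3
\]
produces $u_3=c_1c_2c_3\,B_3$ with
\[
B_3=e^{\,I_{12}+I_{13}+I_{23}+I_{123}}-e^{\,I_{12}}-e^{\,I_{13}}-e^{\,I_{23}}+2 .
\]

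It remains to prove $B_3\not\equiv0$, and here the main obstacle appears. The quantity $B_3$ does vanish in several degenerate configurations (for instance when one point is sent to $\partial D$, so that it decouples), and for fixed positions the genuinely triple term $I_{123}$ competes with the pairwise products $I_{12}I_{13}+I_{12}I_{23}+I_{13}I_{23}$, which a priori could conspire to cancel it. I would remove this difficulty by using the intensity $\lambda$ as a grading: each $I_B$ is proportional to $\lambda$, so a Taylor expansion gives $B_3=I_{123}+O(\lambda^2)$, while $c_1c_2c_3=1$ and $B_3=0$ at $\lambda=0$ (the empty loop soup). Hence
\[
\frac{\partial u_3}{\partial\lambda}\Big|_{\lambda=0}=\int_D\prod_{j=1}^3\big(e^{i\beta N_\gamma(z_j)}-1\big)\,d\mu_D(\gamma).
\]
For the spin field $\beta=\pi$ the integrand equals $(-2)^3=-8$ on loops whose windings around $z_1,z_2,z_3$ are all odd and $0$ otherwise, so this derivative equals $-8\,\mu_D\big(\gamma:N_\gamma(z_j)\text{ is odd for }j=1,2,3\big)<0$; the event has positive measure because a large simple loop encircling the three points exactly once realizes it. Consequently $u_3<0$ for all sufficiently small $\lambda>0$, and since $u_3$ is real-analytic in $\lambda$ it is nonzero for all but at most a discrete set of intensities.

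This shows that the connected three-point function is not identically zero, so the limiting correlators violate Wick's relations and cannot be those of any Gaussian field, proving the proposition. The only additional point for the full one-parameter family is the treatment of general $\beta$, where the same computation yields $\partial_\lambda u_3|_{\lambda=0}=\int_D\prod_j(e^{i\beta N_\gamma(z_j)}-1)\,d\mu_D(\gamma)$; this integral is real (its imaginary part cancels under the orientation-reversal symmetry $N_\gamma\mapsto-N_\gamma$ of $\mu_D$) and, being a nontrivial real-analytic function of $\beta$ that equals $-8\,\mu_D(\text{all windings odd})$ at $\beta=\pi$, it vanishes at most on an exceptional set, so the conclusion persists throughout the family.
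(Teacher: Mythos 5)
Your argument starts exactly where the paper's does: the quantity you call $B_3$ is precisely the left-hand side of the paper's identity \eqref{eq:gaussian-identity}, i.e.\ the normalized third cumulant, and your multiplicative formula $G_A=\big(\prod_j c_j\big)\exp\big[\sum_{B\subseteq A,\,|B|\ge 2}I_B\big]$ is the same Poissonian exponential structure the paper reads off from Theorem~\ref{thm:npoint}; your finiteness check for the $I_B$ with $|B|\ge 2$ is also correct. The two proofs part ways only at the last step, where one must exhibit a configuration with $B_3\neq 0$. The paper fixes $\lambda$ and uses the geometry: for three points forming a small equilateral triangle centred in the unit disc it evaluates $B_3=e^{4\lambda(a+b)}\big[e^{8\lambda b}-3\big]+2$, where $a$ and $b$ are the $\mu_{\mathbb{D}}$-masses of loops winding oddly around all three, respectively exactly two, of the points, and argues that this expression cannot vanish once the relevant loop masses are made large by letting the points coalesce. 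You instead fix the points and perturb in $\lambda$, isolating the genuinely connected contribution $I_{123}=-8\lambda\,\mu_D(\text{all three windings odd})<0$ at first order and then invoking analyticity in $\lambda$. Your version is cleaner in that it avoids any asymptotic analysis of loop masses and makes transparent \emph{why} Gaussianity fails (a nonvanishing triple term in the exponent).

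The price is a weaker quantifier, and this is the one point you should repair. For a fixed configuration, analyticity only rules out a discrete set of intensities, and that exceptional set may depend on the configuration; so as written you have not shown that \emph{each} fixed $\lambda>0$ yields non-Gaussian correlation functions, which is the natural reading of the proposition and what the paper's configuration-based argument is aimed at. The fix is available inside your own framework: at fixed $\lambda$, specialize to the paper's symmetric configuration, where $B_3$ takes the explicit form above, and check that the divergence of the pairwise exponents as the points coalesce is incompatible with $B_3\equiv 0$. Your closing remark on general $\beta$ has the same generic-versus-every character and would additionally require spelling out the complex-field form of Wick's relations, though the paper itself only carries out the computation for $\beta=\pi$.
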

The proof of this result is postponed until the end of this section.

The idea for the proof of Theorem~\ref{thm:fieldconv} is to show that the fields $V^{\delta}_{\beta}$ form a Cauchy sequence in the Banach space
$L^2(\Omega,\mu_D; \Sob^{-\alpha})$ of $\Sob^{-\alpha}$-valued, $\mu_D$-square integrable random variables.
The main ingredient is the following proposition describing the behavior of the two-point functions. We give the proof after the proof of Theorem~\ref{thm:fieldconv}.

\begin{proposition} \label{prop:2pointfunc}
Let $D$ be a bounded simply connected domain with a $C^1$ boundary.
For any $z,w \in D$, $z\neq w$, and $\Delta >0$, the limit
\begin{align} \label{eq:convtwopoint}
 \lim_{\delta,\delta' \to 0^+} (\delta\delta')^{-2\Delta} \langle V^{\delta}_{\beta}(z) \overline{V^{\delta'}_{\beta}(w)}\rangle_{ D}=:  \langle V_{\beta}(z) \overline{V_{\beta}(w)}\rangle_{ D} 
\end{align}
exists. Moreover, if $\Delta <1/2$, then the convergence holds also in $L^1(D\times D, dzdw)$, and if $\Delta \geq 1/2$, then $\langle V_{\beta}(z) \overline{V_{\beta}(w)}\rangle_{ D}\notin L^1(D\times D, dzdw)$.
\end{proposition}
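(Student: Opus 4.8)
The plan is to use the Poissonian structure of $\BLS_D$ to put the regularized two-point function in closed exponential form, to factor off the two one-point functions (whose $\delta\to 0$ asymptotics are already available from \cite{CGK,BLV}), and to reduce everything to a single cutoff-independent ``interaction'' integral over the loops that surround both points.

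First I would apply the characteristic functional of the Poisson process $\BLS_D$ (the identity already used in \eqref{eq:n-point-function}) with $G(\gamma)=\beta N_\gamma(z)\I[\diam\gamma>\delta]-\beta N_\gamma(w)\I[\diam\gamma>\delta']$ to write
\[
\langle V^\delta_\beta(z)\overline{V^{\delta'}_\beta(w)}\rangle_{D}=\exp\Big[\lambda\int\big(e^{iG(\gamma)}-1\big)\,d\mu_D(\gamma)\Big].
\]
Splitting the loops into those covering only $z$, only $w$, both, or neither, and using that any loop covering both points has $\diam\gamma\ge|z-w|$, the ``both'' part is independent of $\delta,\delta'$ once $\delta\vee\delta'<|z-w|$; the elementary identity $e^{i\beta(a-b)}-e^{i\beta a}-e^{-i\beta b}+1=(e^{i\beta a}-1)(e^{-i\beta b}-1)$ then recombines the pieces into
\[
\langle V^\delta_\beta(z)\overline{V^{\delta'}_\beta(w)}\rangle_{D}=\langle V^\delta_\beta(z)\rangle_{D}\,\langle V^{\delta'}_\beta(w)\rangle_{D}\,e^{\lambda L(z,w)},\qquad \delta\vee\delta'<|z-w|,
\]
where $L(z,w)=\int_{\{\gamma\ \mathrm{covers}\ z\ \mathrm{and}\ w\}}(e^{i\beta N_\gamma(z)}-1)(e^{-i\beta N_\gamma(w)}-1)\,d\mu_D(\gamma)$ is finite and, by loop-reversal invariance of $\mu_D$, real. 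Multiplying by $(\delta\delta')^{-2\Delta}$ and inserting the one-point asymptotics $\delta^{-2\Delta}\langle V^\delta_\beta(z)\rangle_{D}\to G_1(z)=c\,r(D,z)^{-2\Delta}$ from \cite{CGK} (the same input as in Theorem~\ref{thm:npoint}) gives existence of the limit together with the explicit formula $\langle V_\beta(z)\overline{V_\beta(w)}\rangle_{D}=G_1(z)G_1(w)e^{\lambda L(z,w)}$, valid for all $\Delta>0$.

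For the $L^1$ dichotomy the key is the behavior of $L(z,w)$. Loops covering both points of diameter much larger than $|z-w|$ have $N_\gamma(z)=N_\gamma(w)$, so their contribution to the integrand is $|e^{i\beta N_\gamma(z)}-1|^2\ge 0$; comparing with the real part of the one-point exponent, which carries the factor $r(D,z)^{-2\Delta}$, yields the uniform estimate $\lambda L(z,w)=4\Delta\log\frac{r(D,z)\wedge r(D,w)}{|z-w|}+O(1)$ in the regime $|z-w|\le\tfrac12\big(r(D,z)\wedge r(D,w)\big)$, and $\lambda L(z,w)=O(1)$ otherwise. Together with $G_1\asymp r(D,\cdot)^{-2\Delta}$ this produces the two-sided bound
\[
\langle V_\beta(z)\overline{V_\beta(w)}\rangle_{D}\asymp |z-w|^{-4\Delta}\quad\text{when } |z-w|\le\tfrac12\big(r(D,z)\wedge r(D,w)\big),
\]
with the boundary distances cancelling on the diagonal, and $\langle V_\beta(z)\overline{V_\beta(w)}\rangle_{D}\asymp r(D,z)^{-2\Delta}r(D,w)^{-2\Delta}$ for separated points. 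Splitting $D\times D$ accordingly, the diagonal region contributes $\int_D\big(\int_{|w-z|\le r(D,z)/2}|z-w|^{-4\Delta}\,dw\big)dz$, whose inner integral converges in two dimensions precisely when $4\Delta<2$, while the separated region contributes at most $\big(\int_D r(D,z)^{-2\Delta}\,dz\big)^2$, finite precisely when $2\Delta<1$ (using Koebe's theorem $r(D,z)\asymp\dist(z,\partial D)$ and the fact that for a $C^1$ boundary the collar of width $t$ has area $O(t)$). Both thresholds coincide at $\Delta=1/2$, giving $L^1$ membership for $\Delta<1/2$; for $\Delta\ge 1/2$ the lower bound $\langle V_\beta(z)\overline{V_\beta(w)}\rangle_{D}\ge c\,|z-w|^{-4\Delta}$ on a fixed interior compact already forces $\int|z-w|^{-4\Delta}\,dw=\infty$, so the limit is not in $L^1$.

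Finally, for $L^1$-convergence when $\Delta<1/2$ I would split $D\times D$ into $\{|z-w|>\delta\vee\delta'\}$, where the factorized formula applies, the integrand converges pointwise, and is dominated uniformly in $\delta,\delta'$ by the $L^1$ majorant just obtained, so dominated convergence applies; and its complement $\{|z-w|\le\delta\vee\delta'\}$, where the crude bound $(\delta\delta')^{-2\Delta}\big|\langle V^\delta_\beta(z)\overline{V^{\delta'}_\beta(w)}\rangle_{D}\big|\le(\delta\delta')^{-2\Delta}$ integrates to $O\big((\delta\vee\delta')^{2-4\Delta}\big)\to 0$. The hard part will be the uniform Brownian-loop-measure estimates behind the behavior of $L(z,w)$ and behind the domination: one must show that loops covering both points but winding differently around them (necessarily of diameter comparable to $|z-w|$) contribute only $O(1)$, that replacing ``covers both'' by ``covers $z$ with diameter at least $|z-w|$'' costs only $O(1)$, and that all these bounds are uniform as $w\to z$ and uniform up to $\partial D$; these rest on the sharp one-point estimate of \cite{BLV} and on bounds for the loop measure of loops separating two nearby points in $D$.
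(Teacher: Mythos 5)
Your opening moves---the Poisson exponential formula, the observation that the contribution of loops covering both points is cutoff-independent once $\delta\vee\delta'<|z-w|$, and the reduction of the pointwise limit to one-point asymptotics---are essentially the paper's argument in a different packaging: the paper writes the exponent as $-\lambda(\kappa^{\delta}_{z,w}+\tau^{\delta}_{z,w}+\tau^{\delta'}_{w,z}+\tau^{\delta',\delta}_{w,z})$ with $(1-\cos)$ weights and uses the exact scaling identity of Lemma~3.2 of \cite{CGK}, with $m_{z,w}=\dist(z,\partial D\cup\{w\})$ as reference scale, to see that the rescaled two-point function is literally constant once $\delta<m_{z,w}$ and $\delta'<m_{w,z}$. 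That part of your proposal is sound, as is the integrability computation for the majorant $(m_{z,w}m_{w,z})^{-2\Delta}$ and the treatment of the region $|z-w|\le\delta\vee\delta'$.

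The genuine gap is everything you route through the asserted asymptotic $\lambda L(z,w)=4\Delta\log\frac{r(D,z)\wedge r(D,w)}{|z-w|}+O(1)$. The justification you offer for it---that loops covering both points with diameter much larger than $|z-w|$ have $N_\gamma(z)=N_\gamma(w)$---is false: such a loop can perfectly well send a strand between $z$ and $w$. More importantly, the real part of your interaction integrand, $\cos(\beta(N_\gamma(z)-N_\gamma(w)))-\cos(\beta N_\gamma(z))-\cos(\beta N_\gamma(w))+1$, is \emph{not} of definite sign (take $N_\gamma(z)=1$, $N_\gamma(w)=-1$, $\beta=\pi/3$: it equals $-1/2$), so you cannot upper-bound $e^{\lambda L}$ by discarding loops, and the crude bound $|\cdot|\le 4$ times $\mu_D(\gamma:\text{covers both})\sim c\log(1/|z-w|)$ produces the wrong power of $|z-w|$. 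Hence both the domination you need for dominated convergence when $\Delta<1/2$ and the diagonal lower bound $\ge c|z-w|^{-4\Delta}$ you invoke for $\Delta\ge1/2$ rest on a sharp two-sided loop-measure estimate that you explicitly defer. The paper's decomposition makes this unnecessary: since $\kappa$ and the $\tau$'s are integrals of $1-\cos\ge 0$, simply dropping $\kappa^{\delta}_{z,w}$ and $\tau^{\delta',\delta}_{w,z}$ yields the uniform majorant $(m_{z,w}m_{w,z})^{-2\Delta}$ for \emph{all} $z,w,\delta,\delta'$; and for $\Delta\ge1/2$ the divergence is extracted not on the diagonal but from pairs with $|z-w|>\diam D/2$ near the boundary, where $\kappa$ is trivially bounded and $\sup_{z,w}\tau^{m_{z,w}}_{z,w}<\infty$ follows from the Nacu--Werner thinness of loop hulls, so the lower bound $C(m_{z,w}m_{w,z})^{-2\Delta}$ and the non-integrability of $((1-|z|)(1-|w|))^{-2\Delta}$ finish the proof. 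To repair your version you would have to actually prove a uniform bound on the measure of loops covering both points but winding differently around them; the cleaner fix is to adopt the $(1-\cos)$-weighted decomposition so that all exponent contributions are nonnegative.
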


We are now ready to prove the convergence of the field in the appropriate Sobolev space.
\begin{proof}[Proof of Theorem~\ref{thm:fieldconv}]
Define the rescaled field $\tilde V^{\delta}_{\beta}= \delta^{-2\Delta}V^{\delta}_{\beta}$, and note that $\tilde V^{\delta} \in L^2$ since it is bounded. By \eqref{eq:norm} and orthonormality of $u_i$'s in $L^2$,
we have
\begin{align*}
& \big\langle \| \tilde V^{\delta}_{\beta}-\tilde V^{\delta'}_{\beta}\|^2_{\Sob^{-\alpha}} \big\rangle_D \\&
= \sum_{i} \frac{1}{\lambda_i^{\alpha}} \big\langle \big | \int_D  (\tilde V^{\delta}_{\beta}- \tilde V^{\delta'}_{\beta}) \overline{u_{i}(z)} dz \big|^2  \big\rangle_D \\
& = \sum_{i}\frac{1}{ \lambda_i^{\alpha}} \Big\langle  \int_D\int_D(\tilde V^{\delta}_{\beta}(z)-\tilde V^{\delta'}_{\beta}(z))  \overline{u_{i}(z) (\tilde V^{\delta}_{\beta}(w)-\tilde V^{\delta'}_{\beta}(w))}  {u_{i}(w)} dz dw\Big\rangle_D \\
&\leq \Big( \int_D\int_D \big| \langle (\tilde V^{\delta}_{\beta}(z)-\tilde V^{\delta'}_{\beta}(z))\overline{(\tilde V^{\delta}_{\beta}(w)-\tilde V^{\delta'}_{\beta}(w))} \rangle_D \big| dzdw\Big)     \sum_{i}\frac{c^2}{ \lambda_i^{\alpha-1/2}} ,
\end{align*}
where in the last inequality we used Fubini's theorem and the uniform bound~$\|u_i\|_{L^{\infty}(D)} \leq c \lambda_i^{1/4}$ of~\cite{Grieser}.
The series in the last expression is finite for $\alpha > 3/2$ by Weyl's law~\cite{Weyl} which says that
\[
N(\ell) = \frac{1}{4\pi} \text{Area} (D) \ell (1+o(1)) \qquad \text{as } \ell \to \infty,
\]
where $N(\ell) = \# \{ i: \lambda_i \leq \ell\}$ is the eigenvalue counting function.
Therefore the convergence in $L^1(D\times D, dzdw)$ of the two-point functions from Proposition~\ref{prop:2pointfunc} implies that 
\begin{align} \label{eq:Cauchy}
 \big\langle \|\tilde V^{\delta}_{\beta}-\tilde V^{\delta'}_{\beta}\|^2_{\Sob^{-\alpha}} \big\rangle_D \to 0  \qquad \text{as } \delta, \delta' \to 0^+.
\end{align}

The space $L^2(\Omega,\mu_D; \Sob^{-\alpha})$ of random variables $X$ with values in $ \Sob^{-\alpha}$ and such that $\langle \|X \|^2_{\Sob^{-\alpha}} \rangle_D <\infty$ 
is a Banach space with norm $\langle \|X \|^2_{\Sob^{-\alpha}}\rangle^{1/2}$ \cite{LedTal}. Therefore, the desired convergence follows from \eqref{eq:Cauchy} and from the
completeness of Banach spaces. 
\end{proof}

\begin{proof}[Proof of Proposition~\ref{prop:2pointfunc}]
Assume that $\delta > \delta'$ and let $z,w \in D$. For $k,l \in \IZ$, let
\begin{align*}
\kappa^{\delta}_{z,w} &= \sum_{k,l \in \IZ} \mu^{\delta}_{D}\big ( \gamma : N_{\gamma} (z) = k, N_{\gamma} (w) = l,\ z,w \in \bar{ \gamma} \big)(1-\cos((k-l)\beta)),\\
\tau^{\delta}_{z,w} &= \sum_{k \in \IZ} \mu^{\delta}_{D} \big( \gamma : N_{\gamma} (z) = k,\ z \in \bar \gamma, w \notin \bar \gamma \big) (1-\cos(k\beta)),\\
\tau^{\delta',\delta}_{w,z} &= \sum_{k \in \IZ} \mu^{\delta',\delta}_{D} \big( \gamma : N_{\gamma} (w) = k,\ z,w \in \bar \gamma \big) (1-\cos(k\beta)),
\end{align*}
where $\mu^{\delta',\delta}_{D}$ is the measure $\mu_{D}$ restricted to loops of diameter $>\delta'$ and $\leq\delta$.

Using elementary properties of Poisson point processes, just as in the proof of Theorem~4.3 in~\cite{CGK}, we can express the two-point function in terms of the 
loop measure. We get that, for all $z, w$ and $\delta,\delta'$,
\begin{align} \label{eq:2pointformula}
&(\delta\delta')^{-2\Delta}  \langle V^{\delta}_{\beta}(z) \overline{V^{\delta'}_{\beta}(w)}\rangle_{ D} = (\delta\delta')^{-2\Delta}\exp
 \big( -\lambda (\kappa_{z,w}^{\delta} + \tau^{\delta}_{z,{w}}+\tau^{\delta'}_{w,{z}}+\tau^{\delta',\delta}_{w,z}) \big).
\end{align}
Let $m_{z,w} = \dist (z, \partial D \cup \{ w\})$.
By Lemma~3.2 of~\cite{CGK}, for $\delta < m_{z,w}$, 
\begin{align}  \label{eq:1pointformula}
\exp \Big( -\lambda  \tau^{\delta}_{z,{w}}\Big) = \Big(\frac{m_{z,w}}{\delta}\Big)^{-2\Delta} 
\exp ( -\lambda \tau^{m_{z,w}}_{z,{w}} ).
\end{align}
An analogous identity holds when we interchange $z$ and $w$. 
Moreover, note that $\kappa^{\delta}_{z,w} =\kappa^{|z-w|}_{z,w}  $ for $\delta < |z-w|$, and $\tau^{\delta',\delta}_{w,z} =0$ for $\delta <  |z-w|$.
Hence, \eqref{eq:2pointformula} and \eqref{eq:1pointformula} imply that if $\delta < m_{z,w}$ and $\delta' < m_{w,z}$, then
\begin{align} \label{eq:limitonepoint}
 (\delta\delta')^{-2\Delta}  \langle V^{\delta}_{\beta}(z) \overline{V^{\delta'}_{\beta}(w)}\rangle_{ D} = ( m_{z,w} m_{w,z})^{-2\Delta} 
 \exp ( -\lambda ( \kappa^{|z-w|}_{z,w}  +\tau^{m_{z,w}}_{z,{w}}+\tau^{m_{w,z}}_{w,{z}}) ) 
\end{align}
which is independent of $\delta$ and $\delta'$. This proves~\eqref{eq:convtwopoint}. 

We now assume that $\Delta <1/2$ and focus on the convergence in $L^1(D\times D, dzdw)$.
Observe that, by~\eqref{eq:2pointformula} and~\eqref{eq:1pointformula}, for all $z, w$ and $\delta,\delta'$,
\begin{align*}
(\delta\delta')^{-2\Delta}  \langle V^{\delta}_{\beta}(z) \overline{V^{\delta'}_{\beta}(w)}\rangle_{ D} &
 \leq (\delta\delta')^{-2\Delta} \exp \big ( -\lambda (\tau^{\delta}_{z,{w}}+\tau^{\delta'}_{w,{z}})\big) \\ 
&\leq (\delta \lor m_{z,w})^{-2\Delta} (\delta' \lor m_{w,z})^{-2\Delta} \\
& \leq (m_{z,w} m_{w,z})^{-2\Delta}. 
\end{align*}
Hence, by~\eqref{eq:convtwopoint} and dominated convergence, it is enough to show that
\begin{align*} 
 \int_{D} \int_D(m_{z,w}m_{w,z})^{-2\Delta}dzdw  < \infty.
\end{align*} 
Let $B_r(z) = \{ w: |z-w| < r \}$, $\mathbb{D}=B_1(0)$, and $m^{\mathbb{D}}_{z,w}=\dist(z,\partial \mathbb{D} \cup \{ w\})=|z-w|  \land (1-|z|)$. Let $f: \mathbb{D} \to D$ be a conformal equivalence. 
By the Koebe quarter theorem, $m^{\mathbb{D}}_{z,w} | f'(z) | \leq 4 m_{f(z),f(w)}$.
By Theorem 3.5 of~\cite{Pommerenke}, $f'$ has a continuous 
extension to $\overline{\mathbb{D}}$, and in particular, $\| f'\|_{L^{\infty}(\mathbb{D})} < \infty$.
By integration by substitution, the above integral is equal to
\begin{align*}
&\nonumber \int_{\mathbb{D}} \int_{\mathbb{D}}(m_{f(z),f(w)}m_{f(w),f(z)})^{-2\Delta} |f'(z)f'(w)|^2dzdw  \leq  4^{4\Delta} \| f'\|_{L^{\infty}(\mathbb{D})}^{4-4\Delta} \int_{\mathbb{D}} \int_{\mathbb{D}}(m^{\mathbb{D}}_{z,w}m^{\mathbb{D}}_{w,z})^{-2\Delta} dzdw. \\
\end{align*}
Since $(m^{\mathbb{D}}_{z,w})^{-2\Delta} \leq |z-w|^{-2\Delta} + (1-|z|)^{-2\Delta} $, it is now enough to note that
\begin{align*}
&\int_{\mathbb{D}} |z-w|^{-4\Delta} dz \leq\int_{B(w;2)} |z-w|^{-4\Delta} dz  = \int_0^{2\pi} \int_0^{2}  r^{1-4\Delta} dr d\theta < \infty , \\
 &\int_{\mathbb{D}}  (1-|z|)^{-2\Delta} dz = \int_{0}^{2\pi} \int_0^1 r(1-r)^{-2\Delta}dr d\theta < \infty, \\
 &\int_{\mathbb{D}} \int_{\mathbb{D}}  |z-w|^{-2\Delta }(1-|z|)^{-2\Delta}dwdz \leq \int_{2\mathbb{D}} |w|^{-2\Delta} dw  \int_{\mathbb{D}}  (1-|z|)^{-2\Delta} dz < \infty. 
 \end{align*} 
 This proves convergence in $L^p(D\times D, dzdw)$ for $\Delta < 1/(2p)$.
 
We now assume that $\Delta \geq 1/2$. 
 Note that, by scale invariance of the Brownian loop soup, and since the collection of outer boundaries of loops in the Brownian loop soup is thin in the sense of~\cite{NacWer}, 
 \begin{align*}
 \sup_{z,w \in D} \tau^{m_{z,w}}_{z,w}& \leq 2 \sup_{z,w \in D} \mu_{\IC}^{m_{z,w}}(\gamma: z \in \bar \gamma, w \notin \bar \gamma, \bar  \gamma \cap \partial D = \emptyset) \\
&  \leq 2 \sup_{z,w \in \IC} \mu_{\IC}^{|z-w|}(\gamma: z \in \bar \gamma, w \notin \bar \gamma) \\
& = 2 \mu^1_{\IC}  (\gamma: 0 \in \bar \gamma, 1 \notin \bar \gamma) < \infty .
 \end{align*}
Let $\rho =\diam D/2$. If $| z-w | > \rho$, then $\kappa_{z,w}^{\delta} \leq 2 \mu_{\IC}^{\rho}( \gamma \subset D) < \infty$. Hence, by~\eqref{eq:limitonepoint},  
 there exists $C>0$ such that
\begin{align*}
& (\delta \delta')^{-2\Delta} \langle V^{\delta}_{\beta}(z) \overline{V^{\delta'}_{\beta}(w)}\rangle_{ D} \geq C (m_{z,w} m_{w,z})^{-2\Delta}
\end{align*} 
if $ (z,w) \in I_{\delta,\delta'} := \{ (z,w) \in D^2:\delta < m_{z,w}, \delta' < m_{w,z}, |z-w|>\rho \}$. Using the fact that, for $z,w \in \mathbb{D}$, $m^{\mathbb{D}}_{z,w} \| f' \|_{\infty} \geq m_{f(z),f(w)}$, 
and again integrating by substitution, we have a lower bound of the form
\begin{align*}
 C \| f' \|_{{L^{\infty}(\mathbb{D})}}^{4-4\Delta}& \iint_{f^{-1}[I_{\delta,\delta'}]} (m^{\mathbb{D}}_{z,w} m^{\mathbb{D}}_{w,z})^{-2\Delta} d z d w \\
 & \geq C\| f' \|_{{L^{\infty}(\mathbb{D})}}^{4-4\Delta}  \iint_{f^{-1}[I_{\delta,\delta'}]}  ((1-|z|)(1-|w|))^{-2\Delta}  d z d w  \to \infty,
\end{align*}
as $ \delta,\delta'\to 0$ since $f^{-1}[I_{\delta,\delta'}] \nearrow f^{-1}\{(z,w) \in D^2 : |z-w| > \rho \}$. 
This shows that $\langle V_{\beta}(z) \overline{V_{\beta}(w)}\rangle_{ D}\notin L^1(D\times D, dzdw)$ for $\Delta \geq 1/2$.
\end{proof}

\begin{proof}[Proof of Proposition~\ref{prop:non-triviality}]
It is enough to provide one example. To that end, consider three points, $x,y,z$, contained in the unit disc $\mathbb{D}$ and at equal distance from the center of the disc and from each other.
We write $\phi(x,y,z)$ for the three-point function of the winding field in $\mathbb{D}$, and use analogous notation for the two- and one-point functions.

If the functions defined above satisfied the Wick's relations for Gaussian fields, they would in particular satisfy the following identity:
\begin{equation} \label{eq:gaussian-identity}
\frac{\phi(x,y,z)}{\phi(x)\phi(y)\phi(z)} - \frac{\phi(x,y)}{\phi(x)\phi(y)} - \frac{\phi(y,z)}{\phi(y)\phi(z)} - \frac{\phi(x,z)}{\phi(x)\phi(z)} + 2 =0 .
\end{equation}

Let $a$ denote the $\mu_{\mathbb{D}}$-measure of all loops that wind an odd number of times around $x,y$ and $z$.
Let $b$ denote the $\mu_{\mathbb{D}}$-measure of all loops that wind an odd number of times around two of $x,y$ and $z$, and an even number
of times around the remaining point.
An easy calculation using Theorem~\ref{thm:npoint} shows that \eqref{eq:gaussian-identity} can be written as
\[
e^{4\lambda(a+b)} \big[ e^{8\lambda b} - 3 \big] +2 =0 .
\]
However, the last equation cannot be always satisfied because one can, for example, make $b$ as large as desired by taking $x,y$ and $z$ close to $0$.
This concludes the proof.
\end{proof}

\bigbreak\noindent\textbf{Acknowledgments} 
TvdB and ML thank New York University Abu Dhabi for the hospitality during a visit in 2017.
TvdB and FC thank the Indian Statistical Institute, Delhi Station for the hospitality during a visit in 2016.
FC and ML thank Yves Le Jan for several interesting discussions. The authors also thank an anonymous referee for a careful reading of the manuscript and for providing useful comments.
Part of the research was conducted while TvdB was at the Department of Mathematics of Vrije Universiteit Amsterdam.
The research of ML was funded by EPSRC grants EP/I03372X/1 and EP/L018896/1.

\bibliographystyle{amsplain}
\bibliography{rlf}

\end{document}